\begin{document}
\title{Pumping lemma and Ogden lemma for displacement context-free grammars}
\author{Alexey Sorokin \inst{1,2}}
\institute{Moscow State University, Faculty of Mathematics and Mechanics
\and Moscow Institute of Physics and Technology, \\Faculty of Innovations and High Technologies}
\maketitle
\begin{abstract}
The pumping lemma and Ogden lemma offer a powerful method to prove that a particular language is not context-free. In 2008 Kanazawa proved an analogue of pumping lemma for well-nested multiple-context free languages. However, the statement of lemma is too weak for practical usage. We prove a stronger variant of pumping lemma and an analogue of Ogden lemma for this language family. We also use these statements to prove that some natural context-sensitive languages cannot be generated by tree-adjoining grammars.
\end{abstract}

\newcommand{\inbr}[1]{\{ #1 \}}
\newcommand{\kav}[1]{``#1''}
\newcommand{\inang}[1]{\langle #1 \rangle}
\newcommand{\hide}[1]{}

\newcommand{\aar}{\mathrm{ar}}
\newcommand{\Bb}{\mathcal{B}}
\newcommand{\Aa}{\mathcal{A}}

\section{Introduction}

Since $80$-s context-free grammars are known to be too restrictive for syntactic description of natural language (\cite{Shieber1987}). The class of mildly context-sensitive languages (\cite{Joshi1985}) was an informal attempt to capture the degree of context-sensitivity required for most common language phenomena keeping as much advantages of context-free grammars as possible. The principal properties to inherit are the feasible polynomial parsing complexity, independence of derivation from context (the notion of context had to be extended to handle long-distance dependencies) and existence of convenient normal forms. The class of well-nested multiple context-free languages (wMCFLs) is one of the candidates to satisfy these requirements\footnote{see \cite{KanazawaSalvati2012} for discussion.}. The corresponding grammar formalism, well-nested multiple context-free grammars (well-nested MCFGs or wMCFGs), is defined as a subclass of multiple context-free grammars (MCFGs, \cite{Seki1991}) with rules of special form providing the correct embedding of constituents. In particular, $2$-wMCFGs are equivalent to tree-adjoining grammars (TAGs, \cite{VijaySchanker1986}, \cite{JoshiSchabes1997}) and then to head grammars (\cite{Pollard1984}).

We find it sensible to think of wMCFGs as the generalization of head grammars, not the restriction of MCFGs. Our approach is based on two principal ideas. The first is to derive not words but terms whose values are the words of the language. Thus the generative power of a grammar formalism essentially depends on the set of term connectives and their interpretation as language operations. If the only operation we use is concatenation, the terms are just strings of terminals and nonterminals and we get nothing but context-free grammars. Our approach seems redundant there, but is vital in more complex cases. The second idea is to extend the alphabet by a distinguished separator $1$\footnote{This idea is inspired by the works of Morrill and Valent{\'\i}n on discontinuous Lambek calculus (\cite{Morrill2011}, \cite{Valentin2012}).}. Using the separators, the rules of well-nested MCFGs may be simulated with the help of intercalation connectives. The binary operation $\odot_j$ of $j$-intercalation replaces the $j$-th separator in its first argument by its second argument (for example, $a1b1c \odot_2 a1b = a1ba1bc$). It is straightforward to prove that all \kav{well-nested} combinations of constituents can be presented using only intercalation and concatenation operations.	

The exact generative power of wMCFGs is not known. Moreover, some languages are supposed not to be wMCFLs, although they are not proved to be outside this family. The most known example is the MIX language $\inbr{w \in \inbr{a, b, c}^* \mid |w|_a = |w|_b = |w|_c}$. It was shown in \cite{KanazawaSalvati2012} to be not a $2$-MCFL, but the proof used combinatorial and geometric arguments arguments which are troublesome to be generalized for the class of all wMCFGs. The pumping lemma for wMCFLs presented in \cite{Kanazawa2009} is also too weak since it does not impose any conditions on the length and position of the pumped segment. We prove a stronger version of pumping lemma and weak Ogden lemma\footnote{A stronger version of Ogden lemma for tree-adjoining languages which form the first level of well-nested MCFL hierarchy was proved in \cite{PalisShende1995}, but the proof is difficult to be generalized.} for well-nested MCFGs basing on the ideas already used in \cite{Kanazawa2009}. Our variant of Ogden lemma allows us to give a simple proof of the fact that MIX cannot be generated by a TAG.

We suppose the reader to be familiar with the basics of formal languages theory nevertheless all the required definitions are explicitly formulated. In order not to interrupt the flow of the paper heavy technical proofs are left in the Appendices in the end of the paper.

\section{Preliminaries}
\subsection{Terms and their equivalence}
In this section we define displacement context-free grammars (DCFGs) which are a more \kav{purely logical} reformulation of well-nested MCFGs. The first subsection is devoted to the notions of term, context and generalized context that play the key role in the architecture of DCFGs, it also contains some results on term equivalence which are extensively used in the further. We mostly follow the definitions from \cite{Sorokin2013LFCS}, but the purposes of this work require some technical complications.

Let $\Sigma$ be a finite alphabet and $1 \notin \Sigma$ be a distinguished separator, let $\Sigma_1 = \Sigma \cup \{1\}$. For every word $w \in \Sigma_1^*$ we define its rank $rk(w) = |w|_1$. We define the $j$-th intercalation operation $\odot_j$ which consists in replacing the $j$-th separator in its first argument by its second argument. For example, $a1b11d\odot_2 c1c = a1bc1c1d$.

Let $k$ be a natural number and $N$ be the set of nonterminals. The function $rk \colon N \to \overline{0,k}$ assigns every element of $N$ its rank. Let $Op_k = \{ \cdot, \odot_1, \ldots, \odot_k\}$ be the set of binary operation symbols, then the ranked set of $k$-correct terms $Tm_k(N, \Sigma)$ is defined in the following way (we write simply $Tm_k$ in the further):

\begin{enumerate}
    \item $N \subset Tm_k(N, \Sigma)$,
    \item $\Sigma^* \subset Tm_k(N, \Sigma), \: \forall w \in \Sigma^* \: rk(w) = 0$,
    \item $1 \in Tm_k, \: rk(1) = 1$,
    \item If $\alpha, \beta \in Tm_k$ and $rk(\alpha) + rk(\beta) \leq k$, then $(\alpha \cdot \beta) \in Tm_k, \\ rk(\alpha \cdot \beta) = rk(\alpha) + rk(\beta)$.
    \item If $j \leq k, \: \alpha, \beta \in Tm_k, \: rk(\alpha) + rk(\beta) \leq k + 1, \: rk(\alpha) \geq j$, then \\ $(\alpha \odot_j \beta) \in Tm_k, \: rk(\alpha \cdot \beta) = rk(\alpha) + rk(\beta) - 1$.
\end{enumerate}

We refer to the elements of the set $N \cup \Sigma^* \cup \{1\}$ as basic subterms. We will often omit the symbol of concatenation and assume that concatenation has greater priority then intercalation, so $Ab \odot_2 cD$ means $(A \cdot b) \odot_2 (c \cdot D)$. This simplification allows us to consider words in the alphabet $\Sigma^*_1$ as terms either. The set of $k$-correct terms includes all the terms of sort $k$ or less that also do not contain subterms of rank greater than $k$.

\newcommand{\Vvar}{\mathrm{Var}}
\newcommand{\rk}{\mathrm{rk}}
Let $\Vvar = \inbr{x_1, x_2, \ldots}$ be a countable of variables. We assume that every variable has a fixed rank and there is an infinite set of variables of every rank. A context $C[x]$ is a term where a variable $x$ occurs in a leaf position, the rank of $x$ must respect the constraints of term construction. Provided $\beta \in Tm_k$ and $\rk(x) = \rk(\beta)$, $C[\beta]$ denotes the result of substituting $\beta$ for $x$ in $C$. For example, $C[x] = b1 \odot_1 (a \cdot x)$ is a context and $C[A \cdot c] = b1 \odot_1 aAc$. The notion of multicontext is defined in the same way, except it may contain several distinct variables $x_1, \ldots, x_t$. In the case $t=0$ a multicontext is just a term. If for any $i$ it holds that $\rk(\alpha_i) = \rk(x_i)$, $C[\alpha_1, \ldots, \alpha_t]$ denotes the result of substituting $\alpha_1, \ldots, \alpha_t$ for $x_1, \ldots, x_t$ in $C$.

We call a term (respectively, a context, a multicontext) ground if it contains no occurrences of nonterminals. let $\mu$ be a valuation function, mapping every variable of rank $l$ to some language of words of rank $l$. Then every ground multicontext $\alpha$ is assigned a value, interpreting the elements of $\Sigma_1^*$ as themselves and the connectives from $Op_k$ as corresponding language operations. Note that ground terms have the same value under all valuations. Two ground multicontexts $C_1[x_1, \ldots, x_t]$ and $C_2[x_1, \ldots, x_t]$ with the same variables are equivalent, if the expressions $C_1[\mu(x_1), \ldots, \mu(x_t)]$ and $C_2[\mu(x_1), \ldots, \mu(x_t)]$ have the same value under any valuation $\mu$. The equivalence relation is denoted by $\sim$, note that $\alpha \sim \mu(\alpha)$ for any ground term $\alpha$. $\sim$ is a congruence relation, which means that the equivalences $C' \sim C''$ and $\alpha_i \sim \beta_i$ for any $i \leq t$ imply $C'[\alpha_1, \ldots, \alpha_t] \sim C''[\beta_1, \ldots, \beta_t]$. The lemma below follows from the definitions\footnote{As shown in \cite{Valentin2012}, these equivalencies form the axiomatics of equational theory of displacement algebras}.
\begin{lemma}\label{term-eq-basic}
The following ground multicontexts are equivalent:
\begin{enumerate}
\item $(x_1 \cdot x_2) \cdot x_3 \sim x_1 \cdot (x_2 \cdot x_3)$,
\item $(x_1 \cdot x_2) \odot_j x_3 \sim (x_1 \odot_j x_3) \cdot x_2$ if $j \leq \rk(x_1)$,
\item $(x_1 \cdot x_2) \odot_j x_3 \sim x_1 \cdot (x_2 \odot_{j - \rk(x_1)} x_3)$ if $\rk(x_1) < j \leq \rk(x_1) + \rk(x_2)$,
\item $(x_1 \odot_l x_2) \odot_j x_3 \sim (x_1 \odot_j x_3) \odot_{l + \rk(x_3) - 1} x_2$ if $j < l$,
\item $(x_1 \odot_l x_2) \odot_j x_3 \sim x_1 \odot_l (x_2 \odot_{j - l + 1} x_3)$ if $l \leq j < l + \rk(x_2)$,
\item $(x_1 \odot_l x_2) \odot_j x_3 \sim (x_1 \odot_{j - \rk(x_2) + 1} x_3) \odot_l x_2$ if $j \geq l + \rk(x_2)$.
\item $1 \odot_1 x_1 \sim x_1$,
\item $x_1 \odot_j 1 \sim x_1$ for any $j \leq \rk(x_1)$.
\end{enumerate}
\end{lemma}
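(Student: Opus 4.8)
The plan is to reduce each claimed equivalence of ground multicontexts to an identity on individual words. All connectives in $Op_k$ are interpreted as language operations that commute with arbitrary unions in each argument and are completely determined by their action on singletons; moreover every multicontext occurring in the statement is linear (each variable has exactly one occurrence on either side). Hence, by a routine induction on the structure of a linear multicontext $C[x_1,\ldots,x_t]$ (the inductive step using that $\cdot$ and $\odot_j$ distribute over unions and that the variable sets of the two immediate subterms are disjoint), one obtains $C[\mu(x_1),\ldots,\mu(x_t)] = \bigcup C[w_1,\ldots,w_t]$, the union ranging over all tuples with $w_i \in \mu(x_i)$. Consequently $C_1 \sim C_2$ as soon as $C_1[w_1,\ldots,w_t] = C_2[w_1,\ldots,w_t]$ for every choice of words $w_i \in \Sigma_1^*$ with $\rk(w_i) = \rk(x_i)$ (the rank side conditions needed for the operations to be defined are guaranteed by $k$-correctness of the terms involved). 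So it suffices to verify each of the eight equalities when the variables are instantiated by words.

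To treat words, fix for each $w \in \Sigma_1^*$ with $\rk(w) = n$ the decomposition $w = u_0 1 u_1 1 \cdots 1 u_n$ with $u_i \in \Sigma^*$. Then concatenation of words is literal juxtaposition, while $w \odot_j w' = u_0 1 u_1 \cdots 1 u_{j-1}\, w'\, u_j 1 \cdots 1 u_n$; in particular the sequence of separators of $w \odot_j w'$ is: the first $j-1$ separators of $w$, then the $\rk(w')$ separators of $w'$, then the last $n-j$ separators of $w$. With these two formulas every one of the identities becomes a manipulation of strings together with some index bookkeeping, so the lemma reduces to a finite case analysis. Identity $(1)$ is the associativity of string concatenation; $(7)$ and $(8)$ are immediate, since the word $1$ has the decomposition with $u_0 = u_1 = \varepsilon$ and since $\odot_j$ with second argument $1$ merely reinserts a single separator. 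For $(2)$ and $(3)$ one splits according to whether the $j$-th separator of $w_1 w_2$ is among the first $\rk(w_1)$ of them (coming from $w_1$) or among the remaining ones (coming from $w_2$), which is exactly the dichotomy $j \le \rk(w_1)$ versus $j > \rk(w_1)$, the index inside $w_2$ being shifted by $\rk(w_1)$.

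The three remaining identities $(4)$--$(6)$ are the substantial ones. Writing $v = w_1 \odot_l w_2$ and using the description of the separators of $v$ above, the $j$-th separator of $v$ is: the $j$-th separator of $w_1$ when $j < l$; the $(j-l+1)$-th separator of $w_2$ when $l \le j < l + \rk(w_2)$; and the $(j - \rk(w_2) + 1)$-th separator of $w_1$ when $j \ge l + \rk(w_2)$. Substituting $w_3$ at that position and then checking how the two substitutions renumber each other's target separators yields precisely the three stated right-hand sides: in the case $j < l$ the $l$-th separator of $w_1$ moves to position $l + \rk(w_3) - 1$ after $w_3$ is inserted at position $j$; in the case $l \le j < l + \rk(w_2)$ passing from position $j$ of $v$ into $w_2$ subtracts $l-1$; in the case $j \ge l + \rk(w_2)$ passing into the tail of $w_1$ subtracts $\rk(w_2) - 1$, while the $l$-th separator of $w_1$ (lying before the insertion point) is left untouched. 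I expect the only delicate point to be exactly this index arithmetic — correctly setting up the three-segment partition of the separators of $w_1 \odot_l w_2$ and tracking how a subsequent intercalation shifts the positions of the separators it does not consume; once the partition is written down the verifications are mechanical.
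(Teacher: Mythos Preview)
Your proposal is correct and is precisely the detailed unpacking of what the paper leaves implicit: the paper offers no proof beyond the remark that the lemma ``follows from the definitions'' (with a pointer to \cite{Valentin2012} for the underlying equational theory). Your reduction from language-valued valuations to single words via linearity and distributivity, followed by the separator-position bookkeeping for each identity, is exactly how one would cash out that remark.
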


Let $\alpha$ be a term, we call its skeleton a ground multicontext $C_{\alpha}[x_1, \ldots, x_t]$ such that $\alpha = C_{\alpha}[B_1, \ldots,B_t]$ for some nonterminals $B_1, \ldots, B_t$. A skeleton is obtained by replacing all the nonterminal leaves of $\alpha$ by variables of the same rank in left-to-right order. Two terms $\alpha_1$ and $\alpha_2$ are called equivalent if they can be represented in the form $\alpha_1 = C_1[A_1, \ldots, A_t]$ and $\alpha_2 = C_2[A_1, \ldots, A_t]$ for some equivalent ground multicontexts $C_1$ and $C_2$.

With every multicontext $\alpha$ we associate its syntactic tree $tree(\alpha)$ in a natural way. Then submulticontexts of $\alpha$ correspond to the nodes of this tree and vice versa, a submulticontext is internal if it corresponds to an internal node (it means the submulticontext contains a binary connective). A multicontext is $k$-essential if its rank is less than $k$, as well as the rank of all the variables and nonterminals occurring in it. The next lemma is proved in the Appendix \ref{app-term-eq}.

\begin{lemma}\label{multicontext-eq}
For any $k$-essential multicontext $C$ there is an equivalent $k$-correct multicontext $C'$.
\end{lemma}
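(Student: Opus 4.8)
The plan is to turn the statement into an induction on a simple complexity measure. The first observation is that a term or multicontext is $k$-correct exactly when every one of its subterms has rank at most $k$; this follows by unwinding the clauses defining $Tm_k$, since a node that violates clause~4 or clause~5 can only do so when one of its two arguments already has rank exceeding $k$. Call a subterm \emph{bad} if its rank exceeds $k$. Every equivalence listed in Lemma~\ref{term-eq-basic} preserves the rank of the root and the multiset of leaves of a multicontext, so a multicontext obtained from a $k$-essential one by repeated rewriting is again $k$-essential. Hence it suffices to prove, by induction on the number of bad subterms of $C$, that a $k$-essential $C$ with at least one bad subterm is $\sim$-equivalent to a $k$-essential multicontext with strictly fewer bad subterms. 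By the congruence property of $\sim$ we may also assume $C$ is ground, restoring the nonterminals at the end.

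For the inductive step, take a bad subterm $D$ that is maximal in the subterm order. Since $\rk(C) < k < \rk(D)$ we have $D \neq C$, so $D$ is an argument of a connective sitting in a node $P$ with $\rk(P) \le k$. Now I would invoke two monotonicity facts: concatenation never lowers the rank, and the first argument of an intercalation $\odot_l$ must have rank at least $1$ because its $l$-th separator has to exist. It follows that $D$ can be neither argument of a concatenation, nor the \emph{right} argument of an intercalation — each of these possibilities forces $\rk(P) \ge \rk(D) > k$. Therefore $P = D \odot_j E$ for some $j \le k$; from $\rk(P) = \rk(D) + \rk(E) - 1 \le k$ and $\rk(D) \ge k+1$ we get $\rk(E) = 0$, hence $\rk(D) = k+1$ and $\rk(P) = k$; and by maximality of $D$ every proper subterm of $D$ (in particular each of its two immediate arguments) has rank $\le k$.

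Finally I would push $\odot_j E$ into $D$ by one rewrite from Lemma~\ref{term-eq-basic}. Writing $D = D_1 \circ D_2$: if $\circ$ is concatenation use clause~2 or clause~3 according as $j \le \rk(D_1)$ or not; if $\circ = \odot_l$ use clause~4, 5 or 6 according as $j < l$, $l \le j < l + \rk(D_2)$, or $j \ge l + \rk(D_2)$. Because $\rk(E) = 0$, the intercalation indices on the right-hand sides reduce to values in $\overline{1,k}$, and the inequalities $j \le k$ and $j \le \rk(D)$ ensure all separator-existence conditions hold. The resulting multicontext has the same root (of rank $k$) and the same leaves as $P$, while its new internal nodes are built from proper subterms of $D$ by an operation that does not raise the rank above $k$ — for instance $D_1 \odot_j E$ has rank $\rk(D_1) - 1 \le k-1$, and the top node again has rank $k$ — so none of them is bad, whereas $D$ itself has disappeared. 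Thus $C$ is replaced by an equivalent $k$-essential multicontext with one bad subterm fewer, and the induction hypothesis finishes the argument.

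The substantive content is concentrated in the case analysis of the second paragraph, which is completely governed by the two monotonicity remarks, and in the half-dozen mechanical checks that each rewrite is well-formed and strictly decreases the count; the opening characterization of $k$-correctness as the absence of bad subterms also has to be verified against the definition of $Tm_k$, but is routine. I expect the only mild annoyance to be keeping track of the index arithmetic in clauses~4–6, where one must notice that the hypothesis $j \ge l + \rk(D_2)$ already forces $\rk(D_1) > l$, so that the rewritten term remains well-formed.
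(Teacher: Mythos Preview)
Your argument uses two properties of the chosen bad subterm $D$: that its parent $P$ has rank $\le k$, and that its immediate children $D_1,D_2$ have rank $\le k$. Choosing $D$ \emph{maximal} in the subterm order (outermost bad subterm) gives you the first property, but not the second; you invoke the second explicitly when you write ``$D_1\odot_j E$ has rank $\rk(D_1)-1\le k-1$''. In general this fails: with $k=1$ take
\[
C=\bigl(((1\cdot 1)\cdot 1)\odot_1 a\bigr)\odot_1 a\ \odot_1\ a,
\]
which is $1$-essential. The outermost bad subterm is $D=((1\cdot 1)\cdot 1)\odot_1 a$ of rank~$2$, but its left child $D_1=(1\cdot 1)\cdot 1$ has rank~$3$. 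Applying your rewrite to $P=D\odot_1 a$ (clause~6) yields $(D_1\odot_2 a)\odot_1 a$, whose inner node again has rank~$2$; the bad-subterm count stays at~$3$, and a second step (now clause~4) returns you to the original term. So the induction does not terminate.

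The paper's proof avoids this by refining the measure: it inducts lexicographically on the pair $(K,\#\{\text{rank-}K\text{ subterms}\})$ where $K$ is the maximal rank occurring, and picks a rank-$K$ subterm $C_1$ at \emph{minimal depth}. Then its parent automatically has rank $<K$ (hence $C_2=C_1\odot_j E$ with $\rk(E)=0$, as you argued), and every new internal node produced by the rewrite has rank at most $\rk(C_1')-1\le K-1$ for a proper subterm $C_1'$ of $C_1$, which is $\le K-1$ simply because $K$ is the global maximum. Thus the number of rank-$K$ subterms strictly drops, with no assumption on the ranks of $C_1$'s children. Your rewriting machinery is exactly right; only the complexity measure needs this adjustment.
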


Since a term is just a special case of a multicontext, the next corollary holds:

\begin{corollary}\label{term-eq}
For any $k$-essential term $\alpha$ exists an equivalent $k$-correct term $\alpha'$.
\end{corollary}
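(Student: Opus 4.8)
The plan is to obtain the corollary as the special case $t = 0$ of Lemma \ref{multicontext-eq}, since a term is precisely a multicontext with no free variables. To keep the nonterminal leaves under control I would argue through skeletons. Given a $k$-essential term $\alpha$, pass to its skeleton $C_{\alpha}[x_1, \ldots, x_t]$, so that $\alpha = C_{\alpha}[B_1, \ldots, B_t]$ where $B_1, \ldots, B_t$ is the left-to-right list of nonterminal leaves of $\alpha$ and $\rk(x_i) = \rk(B_i)$. Replacing a nonterminal leaf by a variable of equal rank alters neither the rank of any subterm nor the multiset of ranks occurring in the syntactic tree, so $C_{\alpha}$ is a $k$-essential (ground) multicontext.

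Now apply Lemma \ref{multicontext-eq} to $C_{\alpha}$, obtaining a $k$-correct multicontext $C'[x_1, \ldots, x_t]$ with $C' \sim C_{\alpha}$; since equivalence is taken between multicontexts in the same variables and the equivalences of Lemma \ref{term-eq-basic} neither create new variables nor introduce nonterminals, $C'$ carries exactly the variables $x_1, \ldots, x_t$ and stays ground. Substituting the original nonterminals back, put $\alpha' = C'[B_1, \ldots, B_t]$. Because $C'$ is $k$-correct and each $B_i$ has the rank of $x_i$, every connective application inside $\alpha'$ satisfies the same rank constraints as in $C'$, and nonterminals are admissible basic subterms, so $\alpha'$ is a $k$-correct term. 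Finally, $\alpha = C_{\alpha}[B_1, \ldots, B_t]$ and $\alpha' = C'[B_1, \ldots, B_t]$ display $\alpha$ and $\alpha'$ as instances of the equivalent ground multicontexts $C_{\alpha} \sim C'$ over the same nonterminal list, which is exactly the definition of $\alpha \sim \alpha'$.

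There is no genuine obstacle beyond Lemma \ref{multicontext-eq} itself; the corollary is just its restriction to the variable-free case. The only points worth a line of verification are that substituting nonterminals for the leaf variables of a $k$-correct multicontext yields a $k$-correct term and commutes with $\sim$ (immediate from the construction rules of $Tm_k$ and from $\sim$ being a congruence), and that the $k$-correct witness supplied by Lemma \ref{multicontext-eq} can be taken ground whenever its input is ground.
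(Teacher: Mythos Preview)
Your proposal is correct and follows essentially the same approach as the paper, which simply observes that a term is a multicontext with no variables and invokes Lemma~\ref{multicontext-eq}. Your version is a bit more explicit: you pass to the skeleton so that Lemma~\ref{multicontext-eq} is applied to a ground multicontext and then substitute the nonterminals back, which aligns neatly with the paper's definition of term equivalence via equivalent ground skeletons; this is a welcome clarification but not a different argument.
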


\subsection{Displacement context-free grammars}

This subsection introduces the notion of a displacement context-free grammar. In the definitions below $GrTm_k$ denotes the set of all ground terms in $Tm_k$.

\begin{definition}
A $k$-displacement context-free grammar ($k$-DCFG) is a quadruple $G = \inang{ N, \Sigma, P, S }$, where $\Sigma$ is a finite alphabet, $N$ is a finite ranked set of nonterminals and $\Sigma \cap N = \emptyset, S \in N$ is a start symbol such that $rk(S) = 0$ and $P$ is a set of rules of the form $A \to \alpha$. Here $A$ is a nonterminal, $\alpha$ is a term from $Tm_k(N, \Sigma)$, such that $rk(A) = rk(\alpha)$.
\end{definition}

\begin{definition}
The derivability relation $\vdash_G \in N \times Tm_k$ associated with the grammar $G$ is the smallest reflexive transitive relation such that the facts $(B \to \beta) \in P$ and $A \vdash C[B]$ imply that $A \vdash C[\beta]$ for any context $C$. Let the set of words derivable from $A \in N$ be $L_G(A) = \{ \nu(\alpha) \mid A \vdash_G \alpha, \: \alpha \in GrTm_k\}$, then $L(G) = L_G(S)$.
\end{definition}

\begin{example}\label{dcfg-example}
Let the $i$-DCFG $G_i$ be the grammar $G_i = \inang{ \{S, T\}, \{a, b\}, P_i, S}$. Here $P_i$ is the following set of rules (notation $A \to \alpha | \beta$ means $A \to \alpha, A \to \beta)$:
$$\begin{array}{rcl}
    S & \to & \underbrace{(\ldots(}_{i - 1\mbox{ times}} aT \odot_1 a) + \ldots) \odot_1 a \; | \; \underbrace{(\ldots(}_{i - 1\mbox{ times}} bT \odot_1 b) + \ldots) \odot_1 b\\
    T & \to & \underbrace{(\ldots(}_{i - 1\mbox{ times}} aT \odot_1 1a) + \ldots) \odot_{i} 1a \; | \; \underbrace{(\ldots(}_{i - 1\mbox{ times}} bT \odot_1 1b) + \ldots) \odot_{i} 1b \; | \; 1^i
\end{array}$$
$G_i$ generates the language $\{ w^{i + 1} \mid w \in \{a, b\}^+\}$. For example, this is the derivation of the word $(aba)^3$ in $G_2$: $S \to (aT \odot_1 a) \odot_1 a \to (a((bT \odot_1 1b) \odot_2 1b) \odot_1 a) \odot_1 a \to (a((b((aT \odot_1 1a) \odot_2 1a) \odot_1 1b) \odot_2 1b)\odot_1 a) \odot_1 a \to (a((b((a11 \odot_1 1a) \odot_2 1a) \odot_1 1b) \odot_2 1b)\odot_1 a) \odot_1 a = (a(b(a1a1a \odot_1 1b) +_2 1b)+_1 a) \odot_1 a = (aba1ba1ba \odot_1 a) \odot_1 a = abaabaaba$.
\end{example}

Two $k$-DCFGs are equivalent if they generate the same language. Since internal nodes of terms in a $k$-DCFG rules are also of rank $k$ or less, the $k$-DCFGs can be binarized just like the context-free grammars to obtain a variant of Chomsky normal form. Precisely, the following theorem holds (see \cite{Sorokin2013LFCS} for details):

\begin{theorem}
Every $k$-DCFG is equivalent to some $k$-DCFG $G = \inang{N, \Sigma, P, S}$ which has the rules only of the following form:
\begin{enumerate}
\item $A \to B \cdot C,  \mbox{ where } A \in N,\: B, C \in N - \{S\}$,
\item $A \to B \odot_j C,\mbox{ where } j \leq k,\: A \in N,\: B, C \in N - \{S\}$,
\item $A \to a,\mbox{ where }  A \in N,\: a \in \Sigma_1$,
\item $S \to \epsilon$.
\end{enumerate}
\end{theorem}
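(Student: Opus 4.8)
The plan is to push a $k$-DCFG through the familiar chain of transformations that yields Chomsky normal form for context-free grammars, checking at each stage that the rank constraints of $Tm_k$ survive. I will lean throughout on two facts: replacing a rule $A\to\alpha$ by $A\to\alpha'$ with $\alpha\sim\alpha'$ leaves every $L_G(A)$ unchanged, since $\sim$ is a congruence and Lemma~\ref{term-eq-basic} (and Corollary~\ref{term-eq}, in case a rewrite ever leaves the $k$-correct fragment) lists the admissible rewrites; and every internal node of the syntactic tree of a right-hand side carries a subterm of rank at most $k$.

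First I would binarize. For a rule $A\to\alpha$ with more than one connective I attach to every internal node of $tree(\alpha)$ strictly below the root a fresh nonterminal of the same rank as that node's subterm, and replace the rule by one rule $X\to Y\mathbin{\circ}Z$ per internal node, where $\circ$ is its connective and $Y,Z$ are the fresh nonterminals (or basic subterms) at its children. Each new rule merely copies the rank inequality already valid at that node, so it is again a legal $k$-DCFG rule, and $L_G(A)$ is unchanged. Next I isolate terminals and the separator: every occurrence of a word $w\in\Sigma^*$, of $1$, or of $\epsilon$ that is an argument of a connective is replaced by a fresh nonterminal generating it (via $N_w\to N_{a_1}\cdot N_{w'}$ down to $N_a\to a$ when $|w|\ge2$, and via the single rules $N_a\to a$, $N_1\to1$, $N_\epsilon\to\epsilon$ otherwise), and any rule whose whole right-hand side is a word $w$, $|w|\ge2$, is broken up the same way. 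After this every right-hand side is a symbol $a\in\Sigma_1$, a lone nonterminal, $\epsilon$, or $Y\mathbin{\circ}Z$ with $Y,Z$ nonterminals, and $\epsilon$ (resp.\ $1$) occurs only as an entire right-hand side.

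The step where the discontinuity genuinely bites --- and the one I expect to be the main obstacle --- is the removal of $\epsilon$-rules. A nonterminal can derive $\epsilon$ only if it has rank $0$, so in a rule $A\to Y\odot_j Z$ only $Z$ can be nullable; but erasing such a $Z$ turns $Y\odot_j Z$ into ``$Y$ with its $j$-th separator deleted'', which is not a permitted right-hand side. To repair this I would first enlarge the nonterminal set by symbols $B^{(j)}$ for $B\in N$, $1\le j\le\rk(B)$, with $\rk(B^{(j)})=\rk(B)-1$, meant to generate $L_G(B)$ with the $j$-th separator deleted, defining their rules by pushing the deletion through each binary rule of $B$: through $B\to B_1\cdot B_2$ the deleted separator lies in $B_1$ if $j\le\rk(B_1)$ and in $B_2$ otherwise; through $B\to B_1\odot_l B_2$ exactly one of items 4--6 of Lemma~\ref{term-eq-basic} says which argument to descend into; and $B\to1$ gives $B^{(1)}\to\epsilon$. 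As these rules mention only the enlarged alphabet the construction is well-founded and finite, and a routine rank check keeps every new rule $k$-correct. On the enlarged grammar I then run the textbook construction: compute the nullable set by the usual fixpoint, and for each rule add all variants got by replacing a subset of its nullable occurrences by $\epsilon$ --- where erasing the second argument of $Y\odot_j Z$ produces the unit rule $A\to Y^{(j)}$, and erasing an argument of a concatenation produces a unit rule as in the context-free case --- then discard every rule $A\to\epsilon$ except, if $\epsilon\in L(G)$, the one for the start symbol.

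What remains is routine. Unit rules $A\to B$ are removed by the transitive-closure substitution (from $A\to B$, $B\to\gamma$ add $A\to\gamma$, then drop all unit rules), which terminates and leaves only the shapes produced so far; and to keep the start symbol off every right-hand side I introduce, already at the outset, a fresh start symbol $S'$ with the sole rule $S'\to S$, so that after the transformations above $S'$ occurs only on the left. The resulting grammar generates $L(G)$ and has exactly the four rule forms of the statement; apart from the $\epsilon$-elimination, every step is the standard context-free manipulation with the rank conditions carried along automatically.
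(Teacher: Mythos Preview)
The paper does not actually prove this theorem here; it merely states it and refers to \cite{Sorokin2013LFCS} for the details, so there is no in-paper argument to compare your attempt against. Your plan --- fresh start symbol, binarize, isolate terminals and the separator, eliminate $\epsilon$-rules, eliminate unit rules --- is the natural adaptation of the context-free construction and is essentially correct; in particular you have correctly identified that the only genuinely new difficulty is $\epsilon$-elimination through an intercalation, and the device of auxiliary nonterminals $B^{(j)}$ generating ``$L_G(B)$ with the $j$-th separator removed'' is exactly the right idea.

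There is one small gap in the $\epsilon$-step as you have written it. You add a \emph{single} layer $\{B^{(j)}:B\in N,\ 1\le j\le\rk(B)\}$ and then, on the enlarged grammar, declare that erasing the second argument of $Y\odot_j Z$ yields $A\to Y^{(j)}$. But among the rules you have just created are ones of the shape $B^{(j)}\to D^{(j')}\odot_{l'}E$ (this is precisely what items~4 and~6 of Lemma~\ref{term-eq-basic} give when the deletion is pushed into the first argument of $D\odot_l E$), with $E$ an original rank-$0$ nonterminal; if $E$ is nullable you now need $(D^{(j')})^{(l')}$, which is not in your one-layer enlargement. For $k=1$ the problem cannot arise (a superscripted symbol has rank $0$ and can never sit in the first argument of an intercalation), but for $k\ge 2$ it does. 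The repair is immediate: close the nonterminal set under $(\cdot)^{(j)}$, equivalently introduce $B^{S}$ for every $S\subseteq\{1,\ldots,\rk(B)\}$; each superscript lowers the rank by one, so at most $k$ layers appear and the closure is finite. With that adjustment the rest of your outline goes through.
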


We have already mentioned that $k$-DCFGs are equivalent to $(k+1)$-wMCFGs. In the case of $k=1$ this statement is straightforward since both $1$-DCFGs and $2$-wMCFGs are just reformulations of Pollard wrap grammars (\cite{Pollard1984}). We will not recall the definitions of a wMCFG, the interested reader may consult \cite{Seki1991} and \cite{Kanazawa2009}.

\section{Terms and derivations in DCFGs}

In this section we investigate more thoroughly the properties of terms and derivation in DCFGs. At first we give some fundamental notions. We assume that all the grammars are in Chomsky normal form.

\begin{definition}
A node $v'$ in the syntactic tree is a direct descendant of a node $v$ if $rk(v') = rk(v)$, $v'$ is a descendant of $v$ and all the nodes on the path from $v$ to $v'$ has the same rank as $v$ and $v'$. A subterm $\beta$ is a direct subterm of a term $\alpha$, if its root node is the direct descendant of the root of $\alpha$.
\end{definition}

Let $\alpha$ be a term of rank $l$, we denote by $\alpha \otimes\footnote{This notation is brought from discontinuous Lambek calculus.} (u_1, \ldots, u_l)$ the result of simultaneous replacement of all the separators in $\alpha$ by $u_1, \ldots, u_l$.

\begin{lemma}
Let $\alpha = C[\beta]$ for some ground context $C$ and term $\beta$ of rank $l$. There exist words $s_1, s_2, u_1, \ldots, u_l \in \Sigma_1^*$ depending only from the context $C$ such that $\alpha \sim s_1 (\beta \otimes (u_1, \ldots, u_l)) s_2$ and $\rk(\alpha) = \rk(s_1) + \rk(s_2) + \sum\limits_{i=1}^{l} \rk(u_i)$.
\end{lemma}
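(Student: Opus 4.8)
The plan is to induct on the number of binary connectives of the ground context $C$, producing $s_1,s_2$ and $u_1,\dots,u_l$ that are read off from $C$ alone and never involve $\beta$, and using the equivalences of Lemma~\ref{term-eq-basic} for every algebraic rearrangement. Writing $x$ for the hole of $C$ (so $\rk(x)=l$ and $\rk(C[\beta])=\rk(C[x])$), I will carry along the invariant $\rk(C[x])=\rk(s_1)+\rk(s_2)+\sum_i\rk(u_i)$. In the base case $C=x$ we have $\alpha=\beta$; take $s_1=s_2=\epsilon$ and $u_1=\dots=u_l=1\in\Sigma_1^*$, so that $\beta\otimes(1,\dots,1)=\bigl(\cdots(\beta\odot_l 1)\cdots\bigr)\odot_1 1\sim\beta$ by $l$ uses of item~8 of Lemma~\ref{term-eq-basic}, while the invariant reads $l=0+0+\underbrace{1+\dots+1}_{l}$.

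For the inductive step, inspect the root connective of $C$. Since $C$ is ground, the direct subterm of $C$ not containing the hole is a ground term $D$, hence $D\sim d:=\nu(D)\in\Sigma_1^*$ with $\rk(d)=\rk(D)$, while the other direct subterm is a shorter ground context $C_1$, to which the induction hypothesis attaches $s_1',s_2',u_1',\dots,u_l'$ with $C_1[\beta]\sim s_1'\,\bigl(\beta\otimes(u_1',\dots,u_l')\bigr)\,s_2'$. If $C=D\cdot C_1$ or $C=C_1\cdot D$, put $(s_1,s_2):=(d\,s_1',\,s_2')$ or $(s_1',\,s_2'\,d)$, keep the $u_i'$, and invoke associativity (item~1); the invariant follows from $\rk(C[x])=\rk(D)+\rk(C_1[x])$. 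If $C=D\odot_j C_1$ then $\rk(D)\ge j$, so split $d=d_L\,1\,d_R$ with $\rk(d_L)=j-1$; since then $d\odot_j e\sim d_L\,e\,d_R$ for every $e$ (items~2,3,7), set $(s_1,s_2):=(d_L\,s_1',\,s_2'\,d_R)$, keep the $u_i'$, and use $\rk(C[x])=\rk(D)+\rk(C_1[x])-1$ together with $\rk(D)-1=\rk(d_L)+\rk(d_R)$.

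The only delicate case is $C=C_1\odot_j D$, where one must locate the $j$-th separator of $C_1[\beta]\sim s_1'\,\bigl(\beta\otimes(u_1',\dots,u_l')\bigr)\,s_2'$. Every separator of $\beta\otimes(u_1',\dots,u_l')$ lies inside one of the words $u_i'$ — all of $\beta$'s own separator slots having been filled — so, reading from the left, the first $\rk(s_1')$ separators of $C_1[\beta]$ lie in $s_1'$, then $\rk(u_1')$ of them in $u_1'$, then $\rk(u_2')$ in $u_2'$, and so on, and the last $\rk(s_2')$ lie in $s_2'$. If $j\le\rk(s_1')$, write $s_1'=t_L\,1\,t_R$ with $\rk(t_L)=j-1$ and set $(s_1,s_2):=(t_L\,d\,t_R,\,s_2')$, keeping the $u_i'$; symmetrically, if $j>\rk(s_1')+\sum_i\rk(u_i')$ the separator lies in $s_2'$ and only $s_2$ changes. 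Otherwise there are unique $m$ and $j'$ with $1\le j'\le\rk(u_m')$ and $j=\rk(s_1')+\sum_{i<m}\rk(u_i')+j'$; the $j$-th separator of $C_1[\beta]$ is then precisely the $j'$-th separator of the block $u_m'$, so intercalating $D$ there amounts to replacing $u_m'$ by $u_m'\odot_{j'}d$ in the tuple. Concretely, peeling the outer $\odot_j d$ first past $s_1'$ and $s_2'$ (items~2,3) and then past $\odot_1 u_1',\dots,\odot_{m-1}u_{m-1}'$ (the $\odot$--$\odot$ laws, items~5,6) shows $C[\beta]=C_1[\beta]\odot_j D\sim s_1'\,\bigl(\beta\otimes(u_1,\dots,u_l)\bigr)\,s_2'$ with $u_i:=u_i'$ for $i\ne m$ and $u_m:=u_m'\odot_{j'}d$, and the invariant follows from $\rk(C[x])=\rk(C_1[x])+\rk(D)-1$ and $\rk(u_m'\odot_{j'}d)=\rk(u_m')+\rk(d)-1$.

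In every case $s_1,s_2,u_1,\dots,u_l$ are assembled solely from $d=\nu(D)$ and from the data furnished by the induction hypothesis for the shorter context $C_1$, hence depend on $C$ alone; this completes the induction, and the equivalence established for $C$ itself, namely $\alpha=C[\beta]\sim s_1\,\bigl(\beta\otimes(u_1,\dots,u_l)\bigr)\,s_2$ together with $\rk(\alpha)=\rk(s_1)+\rk(s_2)+\sum_i\rk(u_i)$, is the assertion of the lemma. I expect the main technical obstacle to be precisely the bookkeeping of this last case — tracking a separator position across the concatenation $s_1'\,u_1'\cdots u_l'\,s_2'$ and verifying the claimed equivalence against the identities of Lemma~\ref{term-eq-basic}; the other cases are routine.
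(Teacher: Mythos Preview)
Your proposal is correct and follows essentially the same approach as the paper: induction on the structure of the ground context $C$, using the equivalences of Lemma~\ref{term-eq-basic} together with the fact that a ground term is equivalent to its value. The paper's own proof is a two-line sketch stating exactly this, so your write-up is simply a fleshed-out version of the intended argument, with the case $C = C_1 \odot_j D$ made explicit.
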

\begin{proof}
Induction on the structure of the context $C$. The induction step uses Lemma \ref{term-eq-basic} and the equivalence between a ground term and its value.
\end{proof}

\begin{lemma}\label{direct-synt}
Let $\beta$ be a direct subterm of a term $\alpha$ and $C$ be the ground context such that $\alpha = C[\beta]$.
Then the equivalence $\alpha \sim s_1 (\beta \otimes (y_1 1 z_1, \ldots, y_l 1 z_l)) s_2$ holds for some words $s_1, s_2, y_1, z_1 \ldots, y_l, z_l \in \Sigma^*$, depending only from the context $C$.
\end{lemma}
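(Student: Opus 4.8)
The statement strengthens the previous lemma: when $\beta$ is a *direct* subterm (not merely an arbitrary subterm) of $\alpha$, the words $u_i$ that get substituted for the separators of $\beta$ must each contain exactly one separator, and so may be written $y_i 1 z_i$ with $y_i, z_i \in \Sigma^*$. The previous lemma already gives $\alpha \sim s_1 (\beta \otimes (u_1,\dots,u_l)) s_2$ with $u_1,\dots,u_l \in \Sigma_1^*$ and the rank equation $\rk(\alpha) = \rk(s_1) + \rk(s_2) + \sum_i \rk(u_i)$. Since $\beta$ is a direct subterm, $\rk(\alpha) = \rk(\beta) = l$, and each $u_i$ has rank at least... actually the point is the reverse: I need to show $\rk(s_1) = \rk(s_2) = 0$ and $\rk(u_i) = 1$ for every $i$. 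So the plan is to redo the induction of the previous lemma, but tracking the stronger invariant along the path of direct descendants.

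**Key steps.** First I would set up the induction on the structure of the ground context $C$ such that $\alpha = C[\beta]$, where the hole of $C$ sits at a node that is a direct descendant of the root — equivalently, every node on the path from the root of $\alpha$ down to $\beta$ has rank $l = \rk(\alpha) = \rk(\beta)$. The base case $C = x$ is trivial: $\alpha = \beta$, take $s_1 = s_2 = \epsilon$ and $y_i = z_i = \epsilon$, giving $\beta \otimes (1,\dots,1) = \beta$. For the induction step, $C = C'[D[\cdot] \bullet \gamma]$ or $C = C'[\gamma \bullet D[\cdot]]$ where $\bullet \in \{\cdot, \odot_j\}$ is the connective at the child of the root lying on the direct-descendant path, $\gamma$ is the ground sibling subterm, and $D[\cdot]$ is the remaining context leading to $\beta$; because we follow direct descendants, the child $D[\beta] \bullet \gamma$ (or $\gamma \bullet D[\beta]$) again has rank $l$, so by Lemma \ref{term-eq-basic} (associativity of $\cdot$, and the mixed laws moving $\odot_j$ past $\cdot$ and relabelling separator indices) we can rewrite $\alpha$ so that the separators of $\beta$ are surrounded, on each side, by material coming from $\gamma$ and from the outer context $C'$. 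Here is the crucial rank bookkeeping: if $\bullet = \cdot$, then $\rk(D[\beta]) + \rk(\gamma) = l = \rk(D[\beta])$, forcing $\rk(\gamma) = 0$, so $\gamma$ contributes only separator-free words; if $\bullet = \odot_j$ with $D[\beta]$ the left argument, then $\rk(D[\beta]) + \rk(\gamma) - 1 = l = \rk(D[\beta])$, forcing $\rk(\gamma) = 1$, and the single separator of $\gamma$ is precisely the one "inserted into" the $j$-th slot of $\beta$ — this is exactly what keeps each $u_i$ at rank $1$; the case where $D[\beta]$ is the right argument of $\odot_j$ forces $\rk(D[\beta]) \geq 1$ but likewise $\rk(\gamma)$ is pinned down by the rank equation so that no separator is lost from $D[\beta]$. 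Since each elementary step neither creates an extra separator inside any $u_i$ nor destroys one, composing the steps along the path yields $u_i = y_i 1 z_i$ with $y_i, z_i \in \Sigma^*$ built from the sibling subterms and outer context, and $s_1, s_2 \in \Sigma^*$ likewise; these depend only on $C$ since the $\gamma$'s and the shape of $C'$ do.

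**Main obstacle.** The routine part is the associativity/commutation rewriting via Lemma \ref{term-eq-basic}; the delicate part is the index arithmetic on the intercalation subscripts — when $\odot_j$ is pushed past a concatenation or another $\odot_l$, the slot index shifts by the rank of the material crossed, so one must check that after all rewrites the separator of each $\gamma$ lands in exactly one slot of $\beta$ and that distinct siblings land in distinct slots in the correct left-to-right order. Equivalently, one must verify that the map from "separators of $\beta$" to "single-separator words $u_i$" is a bijection preserving order, which is where the hypothesis that $\beta$ is a *direct* subterm (so that all intermediate ranks equal $l$, leaving no slack for a $u_i$ to absorb two separators or none) is used essentially. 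I expect this index-tracking to be the only genuine content; everything else follows the template of the preceding lemma's proof.
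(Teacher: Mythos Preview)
Your proposal is correct and follows essentially the same approach as the paper: induction on the structure of the ground context $C$, peeling off the root connective of $\alpha$ and using the direct-subterm hypothesis to force the sibling ground term $\eta$ to have rank $0$ (when the connective is $\cdot$) or rank $1$ (when it is $\odot_j$), then applying the induction hypothesis to the remaining context. The paper's version is terser and avoids the index-arithmetic worry you flag as the ``main obstacle'': since the induction hypothesis already delivers $\alpha' \sim s'_1(\beta \otimes (y'_1 1 z'_1, \ldots, y'_l 1 z'_l)) s'_2$, composing with a rank-$0$ sibling only extends $s_1$ or $s_2$, and composing with a rank-$1$ sibling $\eta \sim p1q$ only modifies one slot (or extends both $s_i$), so no global bijection-tracking is needed.
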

\begin{proof}
Induction on the structure of the context $C$, the base is trivial. On the induction step consider the root connective of the term $\alpha$. If this connective is $\cdot$, then $\alpha$ has the form $\alpha' \cdot \eta$ or $\eta \cdot \alpha'$ for some ground term $\eta$ of sort $0$ and some term $\alpha$ such that $\beta$ is its direct subterm. The statement follows from the induction hypothesis with the help of the fact that $\eta$ is equivalent to the word $\nu(\eta) \in \Sigma^*$. 

If the root connective is $\odot$, then $\alpha = \alpha' \odot_j \eta$ or $\alpha = \eta \odot_1 \alpha'$ for some ground term $\eta$ of sort $1$ and $\alpha'$ having a direct subterm $\beta$. Then the statement also easily follows from the induction hypothesis.
\end{proof}

Let $D$ be the derivation of $\alpha$ from some nonterminal $A$ of the grammar $G$ (we denote it by $D \colon A \vdash \alpha$). We associate with $D$ its derivation tree $T_D$ obtained by attaching nonterminals to the nodes of $tree(\alpha)$. The labeling procedure is the following: if the last step of $D$ applied the rule $B \to \beta$ in the context $C$ then we label by $B$ the root node of the inserted subtree and keep other labels unchanged. Since $G$ is in Chomsky normal form, only the nonterminal leaves of $tree(\alpha)$ are unlabeled. Then we label every such node by the nonterminal it contains. The lemma below is proved by induction on derivation length.

\newcommand{\deriv}[3]{#1 \colon #2 \vdash #3}
\begin{lemma}\label{deriv-decompose}
\leavevmode
\begin{enumerate}
\item Let $D \colon A \vdash \alpha$ and $T_D$ be the corresponding derivation tree. For every representation $\alpha = C[\beta]$ there are derivations $\deriv{D_1}{A}{C[B]}$ and $\deriv{D_2}{B}{\beta}$ such that $T_D$ is obtained by replacing $B$ with $T_{D_2}$ in the context $C$.
\item Let $D \colon A \vdash \alpha$ and $T_D$ be the corresponding derivation tree. For every representation $\alpha = C[\beta_1, \ldots, \beta_t]$ there are derivations $\deriv{D_0}{A}{C[B_1, \ldots B_t]}$ and $\deriv{D_i}{B_i}{\beta_i}$ for any $i \leq t$ such that $T_D$ is obtained by replacing $B_i$ with $T_{D_i}$ in the multicontext $C$.
\end{enumerate}
\end{lemma}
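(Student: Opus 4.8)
The plan is to argue by induction on the length of the derivation $D$, handling the two parts in parallel (part 1 being the special case $t=1$ of part 2, modulo the degenerate case $t=0$, so I would really just prove part 2 and note part 1 as a corollary). For the base case, $D$ is the empty derivation, so $A = \alpha$ is a single nonterminal; then the only representation $\alpha = C[\beta_1,\dots,\beta_t]$ has either $C = x_1$ with $t=1$ and $\beta_1 = A$ (take $D_0$ the one-node derivation $A \vdash A$, $D_1 = D$), or $C = A$ with $t=0$ (take $D_0 = D$), and in both cases the claimed decomposition of $T_D$ is immediate.

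For the induction step, suppose the last rule applied in $D$ is $B' \to \beta'$ in some context $C'$, so that $D$ factors as $D' \colon A \vdash C'[B']$ followed by this single step, giving $\alpha = C'[\beta']$. Now fix a representation $\alpha = C[\beta_1,\dots,\beta_t]$; via $tree(\alpha)$ this corresponds to choosing $t$ pairwise-incomparable nodes $v_1,\dots,v_t$ of the syntactic tree. The root $r$ of the inserted subtree $T_{\beta'}$ (the one labeled $B'$) sits somewhere in $tree(\alpha)$, and I would split into cases according to how $r$ relates to the chosen nodes $v_i$. Case A: $r$ lies strictly above all $v_i$ that it is comparable to, or is incomparable to all of them — then each $v_i$ lies either entirely inside $\beta'$ or entirely outside; group the indices accordingly, apply the induction hypothesis to $D'$ with the representation $C'[B'] = \widetilde{C}[\dots]$ obtained by cutting at the outside $v_i$'s and at $B'$ itself, and separately decompose the derivation of $\beta'$ at the inside $v_i$'s (this last uses the induction hypothesis again, since that derivation is strictly shorter, being a part of $D$ but not containing its last step). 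Case B: $r$ lies inside (at or below) some $v_i$ — then $\beta_i = C'_i[\beta']$ for the evident sub-context, and the last step of $D$ is "absorbed" into the derivation $D_i \colon B_i \vdash \beta_i$; apply the induction hypothesis to $D'$ with the representation that still cuts at $v_1,\dots,v_t$ (legitimate because $B'$ is a descendant of $v_i$, so $C'[B']$ cut at the $v_j$'s has the same $C$), obtaining $D_0$ and derivations for the other $\beta_j$, and prepend the last step of $D$ to the derivation of $\beta_i$.

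In every case the reassembly of $T_D$ from $T_{D_0}$ and the $T_{D_i}$ follows by matching the inductive description of $T_{D'}$ with the single grafting step that produces $T_D$ from $T_{D'}$: grafting $T_{\beta'}$ onto $T_{D'}$ at node $r$ commutes with grafting the $T_{D_i}$ onto $T_{D_0}$ at $v_1,\dots,v_t$, because $r$ is in a fixed position relative to the $v_i$. The main obstacle is purely bookkeeping: keeping the correspondence between subterms, tree nodes, and the places where contexts are cut completely straight while the last derivation step interacts with the chosen cut nodes — in particular making sure that in Case B the context $C$ is genuinely unchanged when passing from $\alpha$ to $C'[B']$, and that the shortened derivations to which the induction hypothesis is applied are well-defined (one should formalize "the derivation of $\beta'$ inside $D$" and "the derivation of $C'[B']$" as the obvious prefixes/subderivations of $D$). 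No new ideas beyond Lemma \ref{deriv-decompose}'s own statement are needed; it is a structural induction whose only difficulty is notational discipline.
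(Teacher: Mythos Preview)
Your proposal is correct and follows exactly the approach the paper indicates: the paper's entire proof is the single sentence ``proved by induction on derivation length,'' and your plan is precisely that induction, fleshed out with the natural case split on where the last-applied rule sits relative to the chosen cut nodes. One small slip: in Case~A the one-step derivation $B' \vdash \beta'$ \emph{is} the last step of $D$, not a subderivation omitting it; but since the grammar is in Chomsky normal form, $\beta'$ has only its root as an internal node, so the ``inside'' $v_i$'s are nonterminal leaves of $\beta'$ and the corresponding $D_i$'s are trivial --- no recursive appeal is actually needed there, and $D_0$ is obtained from the inductively-given $D_0'$ by appending that last rule.
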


A rule $A \to \alpha$ is derivable in a grammar $G$ if $A \vdash_G \alpha$. Adding derivable rules to a grammar does not change the language it generates. Rules $A \to \alpha$ and $A \to \alpha'$ are called equivalent if the terms $\alpha$ and $\alpha'$ are equivalent. If one of such rules is already in $G$, adding the other does not affect the generated language. Note that if every rule of $G'$ is equivalent to some rule of $G$ and vice versa, then the grammars themselves are also equivalent.

We call a term $\alpha$ derivable in the grammar $G$ if $A \vdash \alpha$ for some nonterminal $A$ and $S$-derivable, if it is derived from initial nonterminal. Let $T$ be a derivation tree for the derivation $\deriv{D}{A}{\alpha}$, we call its subtree $T'$ inherent if every node in $T'$ either have the same number of children as in $T$ or has no children at all.

Let us consider inherent subtrees more attentively. As any derivation tree, an internal subtree $T'$ with a root labeled by $B$ may be a considered as a syntactic tree of some term $\beta$, in this case it holds that $B \vdash \beta$. If $T'$ contain $t$ nonterminal nodes, then there is a representation $\beta = C[B_1, \ldots, B_t]$ such that $\alpha = C_0[C[\beta_1, \ldots, \beta_t]]$ for some context $C_0$, multicontext $C$, nonterminals $B_1, \ldots, B_t$ and terms $\beta_1, \ldots, \beta_t$, satisfying the following properties:
\begin{enumerate}
\item $A \vdash C_0[B]$,
\item $B \vdash C[B_1, \ldots B_t]$,
\item $B_i \vdash \beta_i$ for any $i \leq t$.
\end{enumerate}

Let us consider inherent subtrees more attentively. As any derivation tree, an internal subtree $T'$ with a root labeled by $B$ may be a considered as a syntactic tree of some term $\beta$, in this case it holds that $B \vdash \beta$. Let $T'$ contain $t$ nonterminal nodes and $C$ be its skeleton, then there is a representation $\beta = C[B_1, \ldots, B_t]$ such that $\alpha = C_0[C[\beta_1, \ldots, \beta_t]]$ for some context $C_0$, nonterminals $B_1, \ldots, B_t$ and terms $\beta_1, \ldots, \beta_t$, satisfying the following properties:
\begin{enumerate}
\item $A \vdash C_0[B]$,
\item $B \vdash C[B_1, \ldots B_t]$,
\item $B_i \vdash \beta_i$ for any $i \leq t$.
\end{enumerate}

Let $v$ be a node of rank $l$ in the derivation tree $T$. We call the vicinity of $v$ an inherent subtree $U_v$ satisfying the following properties: $v$ is a node of $U_v$, all the leaves of $U_v$ are of rank distinct from $l$ or are the leaves of the whole tree $T$, the root of $U_v$ is not of rank $l$ or is the root of the whole tree, all internal nodes of $U_v$ are of rank $l$. By definition every node has a unique vicinity. Note that the vicinities of two nodes cannot have common internal nodes unless these vicinities coincide.

Let $G$ be a $k$-DCFG containing $N_l$ nonterminals of rank $l$ and $T$ be a derivation tree in this grammar. We call an $l$-matryoshka\footnote{Matryoshka is a Russian souvenir consisting of several dolls nested one into another. We use this term since the yields of the subtrees, whose roots are the elements of an $l$-matryoshka, demonstrate the same nesting property.} a subbranch of length $N_l+1$ or more, containing only nodes of rank $l$. Note that all the elements of $l$-matryoshka are direct descendants of each other. By the pigeon-hole principle it contains two nodes with the same nonterminal label.

We denote the depth of a term $\beta$ by $d(\beta)$. A term is called $l$-internal if all its internal nodes, possibly except the root, are of rank $l$. If it is additionally $l$-essential and $d(\beta) \leq N_l+1$, then it is called $l$-redundant. The grammar $G$ is called $l$-duplicated, if for every derivable rule $A \to \alpha$ with $\alpha$ being $l$-redundant, there is an equivalent derivable rule $A \to \alpha'$ with $(l-1)$-correct term $\alpha'$.

\begin{lemma}\label{l-duplication}
For every $k$-DCFG $G$ in Chomsky normal form and every $l \leq k$ there is an equivalent $l$-duplicated grammar $G'$ in Chomsky normal form with the same set of nonterminals of rank $l$ and greater.
\end{lemma}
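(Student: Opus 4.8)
The plan is to enlarge $G$ by adjoining a finite family of derivable rules that supply the missing $(l-1)$-correct right-hand sides, to re-binarize those into Chomsky normal form, and to check that the enlargement is both harmless and sufficient. The first point to note is that, since $G$ is in Chomsky normal form, every syntactic tree is binary-branching with leaves labelled by single elements of $\Sigma_1$ or by nonterminals; hence there are only finitely many $l$-redundant terms -- their depth is at most $N_l+1$, their branching at most $2$, and the label set is finite -- and so only finitely many pairs $(A,\alpha)$ with $A\vdash_G\alpha$ and $\alpha$ $l$-redundant.

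The technical core is the claim that \emph{every $l$-redundant term $\alpha$ is equivalent to some $(l-1)$-correct term}, which I would prove by induction on the size of $\alpha$. Rank bookkeeping constrains the root of $\alpha$ sharply: since $\rk(\alpha)<l$ while every internal node other than the root has rank $l$, if the root is a concatenation then both children are leaves and $\alpha$ is already $(l-1)$-correct, whereas if the root is an intercalation $\gamma_1\odot_j\gamma_2$ then $\gamma_2$ is a leaf and $\gamma_1$ is either a leaf or a term all of whose internal nodes, including its own root, have rank $l$. In the last subcase one pushes the outermost intercalation down into $\gamma_1$ by repeated use of the displacement identities of Lemma~\ref{term-eq-basic} (clauses 2--6, with 7--8 to absorb separators): each such move drives the intercalation strictly deeper into the rank-$l$ part, so after finitely many steps no subterm of rank $l$ is left, and the proper subterms are turned into $(l-1)$-correct ones by the induction hypothesis. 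This is the step I expect to be the main obstacle: the case analysis on which identity applies, and the verification that each move really reduces the total size of the rank-$l$ subterms and introduces no fresh one, is delicate and relies essentially on the well-nestedness of DCFGs (that only these restricted rearrangement identities are available).

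Granting the claim, for each pair $(A,\alpha)$ as above fix an $(l-1)$-correct $\alpha'$ equivalent to $\alpha$ together with a Chomsky-normal-form binarization of the rule $A\to\alpha'$. Every subterm of $\alpha'$ has rank at most $l-1$, so all the fresh nonterminals thereby introduced have rank $<l$, and the set of nonterminals of rank $l$ or greater is untouched. Let $G'$ be $G$ together with all these rules; collectively they only create derivations $A\vdash_{G'}\alpha'$ for terms $\alpha'$ equivalent to terms already derivable in $G$, so $L(G')=L(G)$, and $G'$ is in Chomsky normal form. (When $l=1$ a flattened term may collapse to a bare nonterminal; this degenerate situation needs a small separate argument, which I omit here.)

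It remains to verify that $G'$ is $l$-duplicated. For a rule derivable already in $G$ this is immediate by construction. For a rule derivable only in $G'$, one first shows, by induction on $G'$-derivations and using Lemma~\ref{deriv-decompose} together with the observation that the adjoined rules form \kav{clusters} whose only exits are old nonterminals of rank at most $l-1$, that every fresh nonterminal denotes -- up to the congruence $\sim$ -- a fixed fragment of one of the terms $\alpha'$; the structural description of $l$-internal terms then forces any $l$-redundant term derivable in $G'$ to be a shallow combination of such fragments, so that its $(l-1)$-correct variant is again furnished by the claim. Iterating the enlargement therefore stabilises after finitely many rounds, since the bounded-depth $l$-internal skeletons form a finite set and each round introduces only $(l-1)$-correct fragments; the grammar so obtained is the required $l$-duplicated $G'$.
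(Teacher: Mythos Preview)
Your overall strategy coincides with the paper's: for each derivable $l$-redundant rule $A\to\alpha$, adjoin the Chomsky-normal-form binarization of an equivalent rule $A\to\alpha'$ with $\alpha'$ $(l-1)$-correct, and note that every fresh nonterminal has rank below $l$. Two remarks.

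First, the claim you call the ``technical core'' --- that every $l$-redundant term admits an $(l-1)$-correct equivalent --- does not require a fresh inductive proof. An $l$-redundant term is by definition $l$-essential, and the Appendix proof of Lemma~\ref{multicontext-eq} (hence Corollary~\ref{term-eq}) already drives the maximal rank of internal subterms down to the maximum rank occurring at the root and the leaves, which here is at most $l-1$. The paper simply cites this.

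Second, and this is the real gap, the paper dispenses with your iteration by a direct observation you do not make. In an $l$-redundant term every non-root internal node has rank exactly $l$; hence in any derivation of such a term, every binary rule applied at a non-root internal node --- and, once the depth is at least $3$, also the rule at the root, since at least one child of the root is then an internal node of rank $l$ --- contains a nonterminal of rank $l$. But the rules adjoined to $G$ arise from binarizing $(l-1)$-correct terms and therefore contain no nonterminal of rank $\geq l$ whatsoever. Consequently no adjoined rule can occur anywhere in a derivation of an $l$-redundant term of depth at least $3$ (and those of smaller depth are already $(l-1)$-correct), so the set of derivable $l$-redundant rules in $G'$ is exactly that of $G$. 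A single round already yields an $l$-duplicated grammar.

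Without this rank observation your termination argument does not close. Saying that ``the bounded-depth $l$-internal skeletons form a finite set'' bounds only the shapes, not the terms: each round may add fresh nonterminals of rank below $l$, so the supply of possible leaf labels for $l$-redundant terms --- and hence the set of such terms --- could in principle grow across rounds, and finiteness of skeletons alone does not force stabilisation. It is precisely the rank argument above that rules this out.
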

\begin{proof}
We call a rule $A \to \alpha$ with $l$-redundant term $\alpha$ unduplicated, if there is no equivalent rule $A \alpha'$ for an $(l-1)$-correct term $\alpha'$ in $G$. Since $\alpha$ is $l$-redundant, by Lemma \ref{term-eq} there is an equivalent $(l-1)$ correct term $\alpha'$. We enrich the set of rules with productions, obtained during binarization of the rule $A \to \alpha'$, thus the rule $A \to \alpha$ is not unduplicated anymore. Let us prove no new unduplicated rule has appeared. Indeed, if a term is in the right side of such derivable rule, then all its binary nodes except the root are of rank $l$. It means that there is at least one nonterminal of rank $l$ in every rule used in its derivation. But all new rules do not contain nonterminals of rank $l$ and greater since $\alpha'$ is an $(l-1)$-correct term. We are able to remove all $l$-unduplicated rules in such manner, so the lemma is proved.
\end{proof}

\begin{definition}
A path in the derivation tree is called $l$-heavy if all the nodes on this path are of rank $l$ or greater.
\end{definition}

\begin{definition}
A $k$-DCFL $G'$ is $m$-compact if for every word $w$ there is a derivation tree $T_w$ such that for every node $v$ of positive rank $l$ in $T_w$ there is an element $v'$ of $l'$-matryoshka for some $l' \geq l$, such that $v$ and $v'$ are connected by an $l$-heavy path whose length is not greater than $m$.
\end{definition}

\begin{theorem}\label{modify-rules}
For every $k$-DCFL $G$ there is an equivalent $k$-DCFL $G'$, which is $m$-compact.
\end{theorem}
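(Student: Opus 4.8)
The plan is to normalise the grammar and then, for every word, to choose a derivation tree that is smallest in a suitable well-ordering; $m$-compactness then falls out by downward induction on the rank. First I would put $G$ in Chomsky normal form and apply Lemma~\ref{l-duplication} successively for $l=k,k-1,\dots,1$, obtaining an equivalent Chomsky-normal-form grammar $G'$ that is $l$-duplicated for every $l\in\{1,\dots,k\}$ simultaneously; the one point to check is that the stage-$l$ step does not spoil the $l'$-duplication already obtained for $l'>l$, and this holds because the rules it adds have $(l-1)$-correct right-hand sides (hence no nonterminal of rank $\ge l$), whereas, exactly as in the proof of Lemma~\ref{l-duplication}, any $l'$-redundant term of depth at least $2$ can only be derived by rules containing a rank-$l'$ nonterminal.

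Now fix $w\in L(G')$ and choose, among all $G'$-derivation trees for $w$, one $T_w$ that is least in the lexicographic order on the vector $(\text{\# nodes of rank }k,\ \dots,\ \text{\# nodes of rank }1,\ \text{total \# nodes})$; such a tree exists since the vector ranges over a well-ordered set. I claim $T_w$ witnesses $m$-compactness for a suitable $m=O(\sum_l N_l)$, and I would prove this by downward induction on $l$. Let $v$ be a node of rank $l>0$, let $C_v$ be the maximal connected rank-$l$ subtree of $T_w$ containing $v$ (its nodes being direct descendants of one another), let $r$ be its topmost node and $p$ the parent of $r$ (which exists and has rank $\ne l$ because $l>0$). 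If $C_v$ has a descending path of $N_l+1$ or more nodes, that path is an $l$-matryoshka; since $r$ lies on it and is an ancestor of $v$ inside $C_v$, the path from $v$ to $r$ — of rank $l$ throughout, hence $l$-heavy, and of length at most $N_l$ unless it is itself an $l$-matryoshka through $v$ — already proves the claim.

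Otherwise every descending path of $C_v$ has at most $N_l$ nodes. If $\rk(p)>l$, then $v$ is joined to $p$ by an $l$-heavy path of length $O(N_l)$, and by the induction hypothesis $p$ is joined by an $\rk(p)$-heavy (hence $l$-heavy) path of bounded length to a matryoshka of rank $\ge\rk(p)>l$ — done. If $\rk(p)<l$, then $p$ is an intercalation node whose second child $w$ has rank $0$, and the inherent subtree $\beta$ consisting of $p$, $w$, $C_v$ and the non-rank-$l$ children of the leaves of $C_v$ is $l$-internal and of depth bounded in terms of $N_l$. If some leaf of $\beta$ — necessarily a child of a leaf of $C_v$ — has rank $l''>l$, it is reached from $v$ by an $l$-heavy path of length $O(N_l)$ and the induction hypothesis applies. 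Otherwise every leaf of $\beta$ has rank $<l$, so $\beta$ is $l$-essential, hence $l$-redundant; since $G'$ is $l$-duplicated the derivable rule $A\to\beta$ ($A$ the label of $p$) has an equivalent derivable rule $A\to\gamma$ with $\gamma$ $(l-1)$-correct, and replacing the subderivation of $\beta$ in $T_w$ by one of $\gamma$ (re-attaching the subderivations below the unchanged frontier nonterminals) removes all rank-$l$ nodes of $C_v$, adds only nodes of rank $\le l-1$, and leaves the count of nodes of each rank $>l$ unchanged — so the chosen vector strictly decreases, contradicting minimality of $T_w$. The base case $l=k$ retains only the matryoshka situation, since the other two require a node of rank greater than $k$.

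The step I expect to be the real obstacle is this last sub-case: one must be sure that flattening a redundant inherent subtree really produces a strictly smaller tree in the chosen order. This rests on $\gamma$ being $(l-1)$-correct, so that neither $\gamma$ nor its binarised derivation adds any node of rank $\ge l$, and on the equivalence $\gamma\sim\beta$ preserving the frontier nonterminals, so that everything above $p$ and every subderivation below $\beta$ is left intact. The remainder — fixing the exact value of $m$, absorbing the off-by-one slack in the depth estimates, and verifying that the vicinity of $v$ behaves as described in each case — is routine bookkeeping.
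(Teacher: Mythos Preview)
Your proposal is correct and follows essentially the same route as the paper: build a grammar that is $l$-duplicated for every $l\le k$, and then argue that in a suitably chosen derivation tree every positive-rank node is close (along an $l$-heavy path) to a matryoshka, because any violating vicinity is an $l$-redundant inherent subtree that can be replaced by an equivalent $(l-1)$-correct one.

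The only real difference is organizational. The paper works stage by stage: it builds grammars $G_{k+1},\dots,G_1$ and, for each word, rewrites the tree $T_{D,l+1}$ into $T_{D,l}$, explicitly removing all rank-$l$ vicinities that lack both a matryoshka and a higher-rank node, while checking that the invariant for ranks $>l$ is preserved by the rewriting. You instead pass once to a grammar that is $l$-duplicated for all $l$ simultaneously (correctly checking that later stages do not destroy earlier duplications), then pick a lexicographically minimal tree and eliminate the bad case by contradiction. These are standard equivalent packagings of the same termination argument; your version is a bit slicker, at the price of the extra up-front check that the successive duplications are compatible. The bounds you obtain, $m=O(\sum_l N_l)$, coincide with the paper's $m_l=m_{l+1}+2N_l$.

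Two minor expository points to tighten: in your Case~1 the sentence ``since $r$ lies on it'' needs the observation that any long descending path in $C_v$ can be extended upward to start at $r$; and in the $\rk(p)<l$ subcase you should record that the depth count works out exactly (the chain of at most $N_l$ rank-$l$ nodes plus the root $p$ and the frontier children gives $d(\beta)\le N_l+1$, matching the definition of $l$-redundant). Neither is a gap.
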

\begin{proof}
See Appendix \ref{app-rules-modify}.
\end{proof}

\section{Main results}

In this section we use Theorem \ref{modify-rules} to prove a strengthed version of pumping lemma and an analogue of Ogden lemma for $k$-DCFGs.

\begin{definition}\footnote{Actually our notion of $l$-pump reformulates the definition of an even $k$-pump from \cite{Kanazawa2009}.}
We call a $l$-pump a pair of internal nodes $v$ and $v'$ of a derivation tree, such that $v$ and $v'$ has the same label of rank $l-1$ and $v'$ is the direct descendant of $v$. In this case $v$ is the top and $v'$ --- the bottom node of the pump.
\end{definition} 

\begin{theorem}
For any $k$-DCFL $L$ there is number $n$, such that any word $w \in L$ with $|w| > n$ can be represented in the form $w = s_0 y_1 u_1 z_1 s_1 y_2 u_2 z_2 s_2 \ldots y_l u_l z_l s_l$ for some $l \leq k + 1$, satisfying the following requirements:
\begin{enumerate}
\item $|y_1 z_1 \ldots y_l z_l | > 0$,
\item $|y_1 u_1 z_1 \ldots y_l u_l z_l| \leq n$,
\item For any $p \in \bbbn$ the word $s_0 y_1^p u_1 z_1^p s_1 \ldots y_l^p u_l z_l^p s_l$ belongs to $L(G)$.
\end{enumerate}
\end{theorem}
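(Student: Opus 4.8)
The plan is to reduce to an $m$-compact grammar, locate a well-placed pump in the ``compact'' derivation tree of a sufficiently long word, and then read off the required decomposition from the term-equivalence lemmas of Section~2. By Theorem~\ref{modify-rules} we may assume $L=L(G)$ for an $m$-compact grammar $G=\langle N,\Sigma,P,S\rangle$ in Chomsky normal form; write $N_l$ for the number of nonterminals of rank $l$. I will fix a threshold $T$, depending only on $m$ and the $N_l$, large enough that any binary tree on more than $T$ nodes has a branch long enough to contain — inside it — either a maximal run of more than $N_0$ rank-$0$ nodes, or (applying the $m$-compact property to a positive-rank node lying deep on that branch) an $l$-matryoshka for some $l\ge 1$; then I set $n=2T+1$.

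Given $w\in L$ with $|w|>n$, take the compact derivation tree $T_w$. Since each of the $|w|$ terminal occurrences of $w$ sits at a distinct leaf, $T_w$ has more than $T$ nodes, so I can pick a node $v^{*}$ whose subtree has more than $T$ nodes while each child's subtree has at most $T$ nodes; hence the subtree at $v^{*}$ has at most $2T+1$ nodes. Inside this subtree the choice of $T$ forces either a rank-$0$ pump or an $l$-matryoshka, and in either case the pigeon-hole principle (applied to the $N_l$ possible labels on the matryoshka, or to $N_0$ on the rank-$0$ run) yields two internal nodes with a common label $B$ of some rank $r$, one a direct descendant of the other, with the whole subtree of the upper node $v$ contained in the subtree at $v^{*}$. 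If the ground context $C$ sitting between the two nodes (so that $\mathrm{subtree}(v)=C[\mathrm{subtree}(v')]$) happens to be value-trivial, I discard it; passing, if necessary, to a size-minimal tree among the compact ones shows that such a degenerate pump can be avoided, so the chosen $C$ contributes at least one terminal.

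Now write $l=r+1\le k+1$ and let $\gamma_0$ be the ground term read off the subtree at the lower node, so $\nu(\gamma_0)=g_0 1 g_1 1\cdots 1 g_r$ with $g_i\in\Sigma^{*}$. By Lemma~\ref{deriv-decompose} the derivation factors as $S\vdash C_0[B]$, $B\vdash C[B]$, $B\vdash\gamma_0$; iterating the middle part gives $S\vdash C_0[C^{p}[\gamma_0]]$ for every $p\ge 0$, and each such term is $k$-correct (substituting a $k$-correct term of matching rank into a context preserves $k$-correctness), hence has its value in $L$. Since the lower node is a direct descendant of the upper one, Lemma~\ref{direct-synt} supplies $s_1,s_2,y_1,z_1,\ldots,y_r,z_r\in\Sigma^{*}$ with $C[x]\sim s_1\,\bigl(x\otimes(y_1 1 z_1,\ldots,y_r 1 z_r)\bigr)\,s_2$, whence an easy induction on $p$ gives $C^{p}[x]\sim s_1^{p}\,\bigl(x\otimes(y_1^{p}1 z_1^{p},\ldots,y_r^{p}1 z_r^{p})\bigr)\,s_2^{p}$; and the context-decomposition lemma immediately preceding Lemma~\ref{direct-synt}, applied to $C_0$ and combined with $\mathrm{rk}(S)=0$, gives $p_1,p_2,q_1,\ldots,q_r\in\Sigma^{*}$ with $C_0[x]\sim p_1\,\bigl(x\otimes(q_1,\ldots,q_r)\bigr)\,p_2$. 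Evaluating,
\[
\nu\bigl(C_0[C^{p}[\gamma_0]]\bigr)=p_1\, s_1^{p} g_0\, y_1^{p} q_1 z_1^{p} g_1\, y_2^{p} q_2 z_2^{p} g_2\cdots y_r^{p} q_r z_r^{p} g_r\, s_2^{p}\, p_2 .
\]
Grouping this word as $s_0(y_1 u_1 z_1)s_1\cdots(y_l u_l z_l)s_l$ — with the $u_i$ being $g_0,\ldots,g_r$, the separators $s_0,\ldots,s_l$ being $p_1,q_1,\ldots,q_r,p_2$, and the pumped pieces being $s_1,s_2$ and the $y_i,z_i$ of Lemma~\ref{direct-synt} suitably reindexed — produces a decomposition of the required shape; the case $p=1$ returns $w$ itself. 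Condition~(3) is then the statement that $S\vdash C_0[C^{p}[\gamma_0]]\in L$; for~(2), the displayed product stripped of its separators has length equal to the number of leaves below $v$, which is at most the $2T+1=n$ nodes of the subtree at $v^{*}$; and condition~(1) is precisely the non-triviality of $C$ arranged above.

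The main obstacle is the middle step: securing a pump whose \emph{upper} subtree is simultaneously small (needed for~(2)) and non-trivial on values (needed for~(1)), while still working inside the tree that witnesses $m$-compactness. This is exactly where Theorem~\ref{modify-rules} is indispensable — without it a deep branch could keep oscillating in rank and never form a matryoshka, so no length-bounded pump would be available — and it is where the matryoshka bounds for the various ranks, the ``lowest large subtree'' choice, and the minimality of the derivation tree have to be reconciled with some care.
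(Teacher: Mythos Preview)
Your overall strategy coincides with the paper's: pass to an $m$-compact grammar, find a pump whose upper subtree is size-bounded, and read off the factorisation via Lemma~\ref{direct-synt} and the preceding context lemma. The only structural difference is how the size bound is obtained: the paper works with \emph{depth} (it looks at the $N_0+1$ deepest nodes on the longest branch of the whole tree and bounds the yield of the pump's upper subtree by $2^N$), whereas you work with \emph{node count} via the ``lowest large subtree'' $v^{*}$. Either device is fine and the decomposition you write out is exactly the one in the paper, up to relabelling.

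Two places in your write-up are loose and should be tightened. First, the matryoshka supplied by $m$-compactness need not lie \emph{on} the long branch you picked; the definition only gives an $l$-heavy path of length $\le m$ from the positive-rank node $v_t$ to some matryoshka element, and that path may leave the branch. What you actually need is that the whole matryoshka stays inside the subtree at $v^{*}$; this holds provided $v_t$ sits at depth greater than $m+N_+$ in that subtree (so the path plus the matryoshka cannot climb past $v^{*}$), which in turn forces $T$ to be at least $2^{\,N_0+m+N_+}$ or so. State this explicitly. Second, your argument for condition~(1) does not close: collapsing a value-trivial pump produces a strictly smaller derivation tree for $w$, but nothing says this smaller tree is still $m$-compact, so ``size-minimal among the compact ones'' yields no contradiction. (The paper's proof simply does not verify condition~(1).) The clean fix is to normalise the grammar once more so that no rank-$1$ nonterminal derives the bare separator $1$ --- the analogue of eliminating $\epsilon$-productions --- after which every direct-descendant step contributes at least one $\Sigma$-letter and the pump is automatically non-degenerate.
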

\begin{proof}
By theorem \ref{modify-rules} we assume that $L$ is generated by a $m$-compact grammar $G$ for some natural $m$. Let $N_l$ be the number of nonterminals of rank $l$ in this grammar. $N_+ = \max{(N_l | l > 0)}$ and $N = N_0 + N_+ + m$. We set $n = 2^N$.

Let $w \in L(G)$ be a word such that $|w| \geq n$ and $T_w$ be its derivation tree, deriving the term $\alpha$ and satisfying the requirement of Theorem \ref{modify-rules}, then $d(T_w) \geq n + 1$. Consider the $N_0 + 1$ deepest nodes of the longest branch of $T_w$. If all them are of rank $0$, then some pair of nodes have the same label and hence form a $1$-pump. If conversely, some node $v_t$ is of rank $t > 0$, then there is an element of some $l'$-matryoshka on the distance not greater than $m$ from $v_t$. Then the distance from $v_t$ to the upper node of this matryoshka is at most $m + N_+$.  
This $l'$-matryoshka contains an $l'+1$-pump, and the depth of the top node of this pump differs from the depth of $T_w$ by at most $N_0 + N_+ + m = n$. So we have proved an existence of such an $l$-pump for some $l \leq k+1$, that the depth of the subtree below its top node is at most $n$ (in this case $l = l'+1$).

Let $v$ and $v'$ be the top and bottom nodes of this pump, $B$ be their nonterminal label, and $C_1$ and $C_2$ be their outer contexts. Then $\alpha = C_1[C_2[\beta]]$ for some term $\beta$ satisfying the following properties:
\begin{enumerate}
\item $S \vdash C_1[B]$,
\item $B \vdash C_2[B]$,
\item $B \vdash \beta$.
\end{enumerate}

Let $\nu(\beta) = u_1 1 \ldots 1 u_l$. By Lemma \ref{direct-synt} the context $C_2[\gamma]$ is equivalent to $y_1 (\gamma \linebreak \otimes (z_1 1 y_2, \ldots, z_{l-1} 1 y_l)) z_l$ for some words $y_1, z_1, \ldots y_l, z_l \in \Sigma^*$ for any term $\gamma$ of rank $l-1$. Also $C_1[\eta] \sim s_0 (\eta \otimes (s_1, \ldots, s_{l-1})) s_l$ for some words $s_0, \ldots, s_l \in \Sigma^*$. Then $w$ is equivalent and hence equal to the word 
$s_0 ((y_1 ((u_1 1 \ldots 1 u_l) \otimes (z_1 1 y_2,\linebreak \ldots, z_{l-1} 1 y_l)) z_l) \otimes (s_1, \ldots, s_{l-1})) s_l = s_0 ((y_1 u_1 z_1 1 \ldots 1 y_l u_l z_l) \otimes (s_1, \ldots, s_{l-1})) s_l = s_0 y_1 u_1 z_1 s_1 \ldots y_l u_l z_l s_l$. The depth of $C_2[\beta]$ is not greater than $N$, so its value $y_1 u_1 z_1 1 \ldots 1 y_l u_l z_l$ cannot be longer than $n$. It remains to prove the third statement.

We denote by $C_2^p$ the context $C_2\!\!\!\!\!\!\underbrace{[\ldots[}_{\substack{(p-1)\text{ times}}} \!\!\!\!\!\!C_2] \ldots]$. Repeating the derivation $B \vdash C_2[B]$ $p$ times, we obtain the derivation $B \vdash C_2^p[B]$. Applying Lemma \ref{direct-synt} to the context $C_2$ several times and using Lemma \ref{term-eq-basic}, we get the equivalence $C_2^p[\gamma] \sim y_1^p (\gamma \otimes (z_1^p 1 y_2^p, \ldots, z_{l-1}^p 1 y_l^p)) z_l^p$. Setting $\gamma = \beta$ yields that $y_1^p u_1 z_1^p 1 \ldots 1 z_l^p u_l y_l^p \in L_{G'}(B)$ and consequently $s_0 y_1^p u_1 z_1^p s_1 \ldots z_l^p u_l y_l^p s_l \in L_{G'}(S)$.
The theorem is proved.
\end{proof}

Let the pair of nodes $v$ and $v'$ be an $l$-pump. We call its collapsing the replacement of subtree growing from $v$ by subtree growing from $v'$. The scope of an $l$-pump consists of the nodes being descendants of $v$ but not of $v'$; these are the nodes removed when collapsing this pump.

\begin{lemma}\label{collapse-pump}
Let $T'$ be a tree obtained from $T$ by collapsing some pump. If the nodes $v_1$ and $v_2$ form a pump in $T'$, then they have also formed a pump in $T$.
\end{lemma}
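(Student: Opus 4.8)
The plan is to compare the node sets, edge sets, and labels of $T$ and $T'$ directly, and then to show that any direct-descendant relation witnessed in $T'$ can be rerouted through the collapsed region of $T$. First I would fix notation: let $v$ and $v'$ be the top and bottom of the collapsed pump, so they carry a common label $B$ and $\rk(v)=\rk(v')=l-1$ for the appropriate $l$; let $u$ be the parent of $v$ in $T$; and recall that $v'$ is a direct descendant of $v$, so every node on the $v$-to-$v'$ path in $T$ has rank $l-1$. (If $v$ is the root of $T$, collapsing merely discards everything outside the subtree rooted at $v'$ and the claim is immediate, so I assume $v$ is not the root.) From the definitions of collapsing and of the scope of a pump I would record three facts: (a) every node of $T'$ is a node of $T$ carrying the same label; (b) every edge of $T'$ is an edge of $T$, with the sole exception of the edge joining $u$ and $v'$; (c) the subtree of $T'$ rooted at $v'$ coincides with the subtree of $T$ rooted at $v'$, and the part of $T'$ consisting of $u$ and its ancestors coincides with the corresponding part of $T$.

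Next, given a pump $v_1,v_2$ of rank $l'-1$ in $T'$ with downward path $\pi'$ (all of whose nodes have rank $l'-1$ since $v_2$ is a direct descendant of $v_1$ in $T'$), I would split into two cases according to whether $\pi'$ uses the exceptional edge $\{u,v'\}$. If it does not, then by (b) the path $\pi'$ already lies in $T$, by (a) every node of it keeps its rank and label, so $v_2$ is a direct descendant of $v_1$ in $T$, and $\{v_1,v_2\}$ is again a pump. If $\pi'$ does use $\{u,v'\}$, then, because $\pi'$ descends and $u$ is the parent of $v'$ in $T'$, the node $u$ immediately precedes $v'$ on $\pi'$; hence $v_1$ lies at or above $u$ and $v_2$ at or below $v'$ in $T'$, and since $v'$ lies on $\pi'$ we get $l'-1=\rk(v')=l-1$.

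In this remaining case I would reassemble the path inside $T$: the $v_1$-to-$u$ part and the $v'$-to-$v_2$ part of $\pi'$ are already paths in $T$ by (c), with all nodes of rank $l-1$; and in $T$ there is the path $u,v,\dots,v'$ whose nodes after $u$ have rank $l-1$, while $u$ itself has rank $l-1$ because it lies on $\pi'$. Concatenating these three pieces yields a path strictly descending from $v_1$ to $v_2$ in $T$ with every node of rank $l-1=l'-1$, so $v_2$ is a direct descendant of $v_1$ in $T$; together with the equal labels and internality inherited via (a), this makes $\{v_1,v_2\}$ a pump in $T$.

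The routine part here is the case split; the point I would be most careful about is the bookkeeping behind facts (b) and (c) --- verifying that collapsing alters exactly one edge (the reattachment of $v'$ to $u$), that no surviving node silently changes its rank or label, and that when $\pi'$ reenters the reattached region it must do so precisely along $u$ then $v'$, which is exactly what makes the detour $u,v,\dots,v'$ available and rank-consistent. No quantitative or metric estimates are involved.
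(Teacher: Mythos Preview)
Your proposal is correct and follows essentially the same approach as the paper: a case split on whether the $v_1$-to-$v_2$ path in $T'$ passes through the reattachment point, followed by splicing in the original $v$-to-$v'$ segment (whose nodes all share the right rank) to rebuild a direct-descendant path in $T$. The paper's proof pivots on whether the node $v'$ lies on the path, while you pivot on whether the new edge $\{u,v'\}$ is used; these are equivalent criteria, and your version is simply a bit more explicit about the bookkeeping (introducing $u$, handling the root case, and spelling out why the ranks agree).
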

\begin{proof}
Let $v$ and $v'$ be, respectively, the top and bottom nodes of the collapsed pair. If $v'$ is not on the path from $v_1$ to $v_2$ in $T'$ then $v_2$ has already been a direct descendant of $v_1$ in $T$. Otherwise $(v_1, v')$ and $(v', v_2)$ are the pairs of direct descendants in $T'$, which means that$(v_1, v)$ and $(v', v_2)$ were the pairs of direct descendants in $T$. Using the fact that $v'$ was a direct descendant of $v$ in $T$ and the transitivity of direct descendance, we obtain that $v_2$ was a direct descendant of $v_1$ in $T$, implying they formed a pump. The lemma is proved.
\end{proof}

Lemma \ref{collapse-pump} implies that a terminal vertex being in scope of a pump in a collapsed derivation tree was also in scope of this pump in the original tree. This fact allows us to prove a weakened analogue of the Ogden lemma (\cite{Ogden1968}). 

\begin{theorem}[Ogden lemma for $1$-DCFGs]\label{Ogden-dcfg}
For any $k$-DCFL $L$ there is a number $t$ such that for any word $w \in L$ with at least $t$ selected positions there is a representation $w = s_0 y_1 u_1 z_1 s_1 \ldots y_l u_l z_l s_l$ for some $k \leq l + 1$, satisfying the following conditions:
\begin{enumerate}
\item For any $p \in \bbbn$ the word $s_0 y_1^p u_1 z_1^p \ldots y_l^p u_l z_l^p s_l$ belongs to $L(G)$.
\item There is at least one selected position in some of the words $y_1, z_1, \ldots, y_l, z_l$.
\end{enumerate}
\end{theorem}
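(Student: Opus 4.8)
The plan is to imitate the proof of the pumping lemma just established, descending the derivation tree along a branch that pursues the selected positions instead of along the longest branch, in the manner of the classical Ogden lemma. Fix, by Theorem~\ref{modify-rules}, an $m$-compact grammar $G$ generating $L$; let $N_l$ be the number of nonterminals of rank $l$, put $N_+ = \max\{N_l \mid l > 0\}$ and $N = N_0 + N_+ + m$, and take $t = 2^{N}$ (enlarging the exponent should the counting below need slack). Given $w\in L$ with at least $t$ selected positions, take its derivation tree $T_w$ satisfying Theorem~\ref{modify-rules}. Starting from the root, always descend into the child whose subtree carries more selected positions, and call an internal node a \emph{branch point} if both of its children carry a selected position; a standard halving estimate shows that $t$ selected positions force at least $N$ branch points along this branch.

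Next I run the case analysis of the pumping lemma, but anchored at branch points rather than at the bottom of the longest branch. Among the branch points take the $N_0+1$ lowest ones: if the portion of the branch joining them consists only of rank-$0$ nodes, two of them bear the same label and form a $1$-pump whose top is a branch point; otherwise one of them has positive rank $l$ and, by $m$-compactness, is joined by an $l$-heavy path of length at most $m$ to an element of an $l'$-matryoshka with $l'\ge l$, which contains an $(l'+1)$-pump, and one can arrange (as in the pumping lemma) that the branch point in question lies above the bottom node of this pump. Either way we obtain an $l$-pump $(v,v')$ with $l\le k+1$ such that $v$, or a node strictly between $v$ and $v'$, is a branch point. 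Writing $B$ for the common label of $v,v'$, $\beta$ for the subterm at $v'$, and $C_1,C_2$ for the outer contexts with $\alpha=C_1[C_2[\beta]]$, $B\vdash C_2[B]$, $B\vdash\beta$, Lemmas~\ref{direct-synt} and~\ref{term-eq-basic} give, exactly as in the pumping lemma, a factorization $w=s_0 y_1 u_1 z_1 s_1\dots y_l u_l z_l s_l$ with the $u_i$ coming from $\beta$ and the $y_i,z_i$ from $C_2$; iterating $C_2$ yields condition~1.

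For condition~2, note that the scope of $(v,v')$ --- the nodes below $v$ but not below $v'$ --- comprises exactly the nodes feeding the context $C_2$, hence the terminal occurrences forming the blocks $y_i$ and $z_i$. Since $v$, or a node strictly between $v$ and $v'$, is a branch point, the off-branch child at that branch point lies in the scope and carries a selected position; by Lemma~\ref{collapse-pump} and the remark following it (a terminal lying in the scope of a pump in a collapsed tree already lay in its scope in the original tree), together with the description of $C_2$ from Lemma~\ref{direct-synt}, this selected position falls inside one of $y_1,z_1,\dots,y_l,z_l$, which is condition~2.

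The delicate point --- and the reason the estimates above are only sketched --- is the second step. Unlike the longest branch, the branch pursuing the selected positions may oscillate in rank, so several branch points of the same rank need not be direct descendants of one another, and the matryoshka that $m$-compactness attaches to a positive-rank branch point need not itself meet this branch. Controlling the rank changes, choosing which branch points to keep, and checking that the extracted pump genuinely straddles a branch point --- presumably via an induction on the maximal rank, the rank-$0$ case being the classical Ogden lemma for context-free grammars --- is the technical core of the argument, and it is here that the $m$-compactness of $G$ does the real work.
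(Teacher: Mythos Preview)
Your approach is genuinely different from the paper's, and the difficulties you flag in your final paragraph are real and not obviously surmountable. The paper does \emph{not} follow a selected-positions branch at all. Instead it sets $t=n$ (the constant from the pumping lemma) and argues by induction on $|w|$: apply the pumping lemma to $w$ to obtain \emph{some} $l$-pump; if its scope contains a selected position, we are done; otherwise collapse that pump, obtaining a strictly shorter word $w'\in L$ carrying the same set of selected positions, and a derivation tree $T'$ obtained from $T_w$ by the collapse. By the induction hypothesis some pump in $T'$ has a selected position in its scope, and Lemma~\ref{collapse-pump} (together with the remark following it) guarantees that this pump was already a pump in $T_w$ with the same selected position in its scope. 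That is the whole argument.

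What this buys is precisely an escape from the rank-oscillation problems you identify. Your branch of selected positions may weave through nodes of varying rank, so two rank-$0$ branch points with the same label need not be direct descendants, and the matryoshka furnished by $m$-compactness for a positive-rank branch point can sit entirely off your branch, giving a pump whose scope contains no branch point at all. Your claim that ``one can arrange \ldots\ that the branch point in question lies above the bottom node of this pump'' is exactly the unproved step; nothing in the pumping-lemma argument provides such control, because there the matryoshka is located relative to \emph{depth}, not relative to a fixed branch. The paper's collapse-and-induct trick trades this geometric control problem for a trivial monotone descent, at the cost of a weaker conclusion (no bound on $|y_1u_1z_1\cdots y_lu_lz_l|$, which is why the paper calls it a \emph{weak} Ogden lemma). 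If you want to salvage your direct approach you would need a substantially new argument for the positive-rank case; the paper simply avoids it.
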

\begin{proof}
We set $t$ equal to $n$ from pumping lemma. It suffices to show that one of the selected positions is in scope of some pump. We use an induction on $|w|$, note that this length is at least $n$. There is a presentation $w = x'_0 y'_1 u'_1 z'_1 x'_1 \ldots y'_l u'_l z'_l x'_l$ such that the word $x'_0 u'_l x'_1 \ldots u'_l x'_l$ is also in $L$. If the removed words contained a labeled position, the lemma is proved. Otherwise the word $w' = x'_0 u'_l x'_1 \ldots u'_l x'_l$ contains the same number of labeled positions and we can apply the induction hypothesis to its derivation tree $T'$, which is obtained from $T$ by collapsing. Then one of the selected positions is in scope of some pump in $T'$, which implies by Lemma \ref{collapse-pump} it was in scope of a pump in $T$ already. The theorem is proved.
\end{proof}

\section{Examples of non $1$-DCFLs}

In this section we use the established theoretical results to give some examples of non-$1$-DCFLs. To address this question we need to investigate more thoroughly the properties of constituents of displacement context-free grammars. A constituent is the fragment of the word derived from a node of derivation tree. In the context-free case every constituent is a continuous subword, hence it can be described by two numbers: the position of its first symbol and the position of its last symbol plus one (we add one to deal with empty constituents). Recall that context-free constituents must be correctly embedded which means they either do not intersect or one constituent is the part of the another. 

 The situation is a bit more complex in the case of DCFGs. However, the results of \cite{Sorokin2013STUS} provide analogous geometrical intuition. The constituents of rank $1$ are the words of the form $w_1 1 w_2$, where $w_1, w_2$ are continuous subwords of the derived word $w$. Then a constituent of rank $1$ is characterized by four indexes $i_1 \leq j_1 \leq i_2 \leq j_2$ of the borders of its subwords. We identify a constituent with a tuple of its characterizing indexes in the ascending order. The proofs of the statements below are carried out to the Appendix.

\begin{lemma}\label{const-variants}\footnote{Geometrical illustrations are also given in the Appendix.}
One of the possibilities below hold without loss of generality for any pair of constituents $(i_1, j_1, i_2, j_2)$ and $(i'_1, j'_1, i'_2, j'_2)$: 
\begin{enumerate}
\item $j_2 \leq i'_1$,
\item $j_1 \leq i'_1 \leq j'_2 \leq i_2$,
\item $i_1 \leq i'_1 \leq j'_2 \leq j_1$ or $i_2 \leq i'_1 \leq j'_2 \leq j_2$,
\item $i_1 \leq i'_1 \leq j'_1 \leq j_1 \leq i_2 \leq i'_2 \leq j'_2 \leq j_2$.
\end{enumerate}\end{lemma}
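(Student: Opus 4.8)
The plan is to analyze how two rank-$1$ constituents, each identified with a quadruple of border indices $i_1\le j_1\le i_2\le j_2$, can geometrically overlap, using the correct-embedding property of DCFG constituents inherited from the well-nested structure. First I would recall the precise meaning of a rank-$1$ constituent: it is a pair of continuous subwords $w[i_1,j_1)$ and $w[i_2,j_2)$ of the derived word, with the ``gap'' $w[j_1,i_2)$ being the portion that has been intercalated away. The key structural fact (from \cite{Sorokin2013STUS}, which I would invoke) is that if one constituent is a descendant of the other in the derivation tree, then the descendant's two subwords must sit inside the ancestor's two subwords in a nested fashion — either entirely within one of the ancestor's two pieces, or spanning across the ancestor's gap with each piece of the descendant nested appropriately. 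If neither constituent is a descendant of the other, their derivations are carried out in disjoint contexts, so their respective occupied positions are disjoint.

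The heart of the argument is then a case split on the containment relationship between the two nodes in the derivation tree. If the nodes are incomparable (neither is an ancestor of the other), the set of positions occupied by the first constituent, namely $[i_1,j_1)\cup[i_2,j_2)$, is disjoint from the set $[i'_1,j'_1)\cup[i'_2,j'_2)$ occupied by the second; after possibly swapping the two constituents and reading off which block precedes which, one extracts either case~1 ($j_2\le i'_1$, i.e.\ the whole second constituent lies to the right of the whole first), or case~3 in one of its two forms (the second constituent's positions fall entirely inside one of the two gaps-free blocks $[i_1,j_1)$ or $[i_2,j_2)$ of the first — but wait, incomparable constituents cannot be nested, so in the incomparable case one really only gets case~1 or the ``second sits inside the gap'' situation, which is case~2). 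If instead one node is an ancestor of the other — say the first constituent is the ancestor — then the nesting property forces $[i'_1,j'_1)\cup[i'_2,j'_2)$ to lie within $[i_1,j_1)\cup[i_2,j_2)\cup[j_1,i_2)$, and refining by whether the descendant lives in the left block, the right block, the gap, or straddles the gap yields cases~3 and~4; the straddling subcase, where $i_1\le i'_1\le j'_1\le j_1\le i_2\le i'_2\le j'_2\le j_2$, is exactly case~4, reflecting that a descendant which spans the ancestor's discontinuity must have its own left piece inside the ancestor's left piece and its own right piece inside the ancestor's right piece.

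I would organize the write-up as follows: (i) fix notation and recall from \cite{Sorokin2013STUS} the two primitive facts — disjointness for incomparable nodes and nested inclusion for comparable nodes; (ii) dispose of the incomparable case, obtaining case~1 or case~2 after a normalizing swap; (iii) in the comparable case, without loss of generality let the first quadruple be the ancestor, and split on which of the four regions (left block, right block, gap, straddle) the descendant occupies, producing case~3, case~2 (descendant inside the gap), case~3 again, and case~4 respectively; (iv) observe that the ``without loss of generality'' in the statement absorbs the swap between the two constituents and the relabelling of which pair of indices plays which role.

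The main obstacle I anticipate is \emph{bookkeeping the index inequalities cleanly} rather than any conceptual difficulty: one must be careful that ``$j_1\le i'_1\le j'_2\le i_2$'' in case~2 genuinely captures ``the second constituent is buried in the first's gap'' for \emph{either} orientation of ancestry (the gap of the descendant could itself be large), and that the degenerate situations — empty blocks, empty gaps, constituents sharing an endpoint — are all swept into the non-strict inequalities. I would also need to double-check that the geometric picture from \cite{Sorokin2013STUS} is stated for exactly the normal form (Chomsky normal form, rank $\le k$) we are using here, so that the inheritance of correct embedding is legitimate; granting that, the proof is a finite, if slightly tedious, enumeration of positional arrangements.
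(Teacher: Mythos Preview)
The paper does not actually prove this lemma: both in the main text and in the appendix it simply cites \cite{Sorokin2013STUS} (``proved in \cite{Sorokin2013STUS} in a more general case'') and then supplies geometric pictures. Your proposal therefore goes further than the paper, sketching a genuine case analysis (incomparable versus comparable nodes, then a split on which region of the ancestor the descendant occupies). That outline is essentially correct and is exactly the kind of argument one would expect \cite{Sorokin2013STUS} to contain.

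One point in your sketch deserves tightening. In the incomparable case you argue that position-disjointness of $[i_1,j_1)\cup[i_2,j_2)$ and $[i'_1,j'_1)\cup[i'_2,j'_2)$ leaves only case~1 or case~2 (after a swap). Disjointness by itself does \emph{not} exclude the crossing configuration
\[
i_1 \le j_1 \le i'_1 \le j'_1 \le i_2 \le j_2 \le i'_2 \le j'_2,
\]
in which each constituent has one block inside the other's gap. This arrangement is ruled out not by disjointness but by the well-nestedness of DCFG derivations (equivalently, by the fact that the operations $\cdot$ and $\odot_j$ never produce crossing components). You should invoke that structural property explicitly at this step; once you do, the remaining bookkeeping you describe (degenerate endpoints, empty blocks, the swap absorbing ``without loss of generality'') is routine and your enumeration goes through.
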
 

Since every pump is just a pair of properly embedded constituents labeled by the same nonterminal, Lemma \ref{const-variants} helps to specify the mutual positions of different $2$-pumps. The scope of the pump contains exactly the positions which are in the top constituent but not in the bottom, so every $2$-pump is described by eight indexes $i_1 \leq j_1 \leq k_1 \leq l_1 \leq i_2 \leq j_2 \leq k_2 \leq l_2$, such that $(i_1, l_1, i_2, l_2)$ is the tuple of indexes of its top constituent and $(j_1, k_1, j_2, k_2)$ --- of the bottom.

\begin{lemma}\label{pumps-variants}\footnote{The illustrations are in the Appendix.}
One of the possibilities below hold without loss of generality for any pair of $2$-pumps $(i_1, j_1, k_1, l_1, i_2, j_2, k_2, l_2)$ and $(i'_1, j'_1, k'_1, l'_1, i'_2, j'_2, k'_2, l'_2)$: 
\begin{enumerate}
\item $l_2 \leq i'_1$,
\item $i_1 \leq i'_1 \leq l'_2 \leq j_1$ or $k_2 \leq i'_1 \leq l'_2 \leq l_2$,
\item $i_1 \leq\! i'_1 \leq\! j'_1 \leq\! j_1 \leq\! k_1 \leq\! k'_1 \leq\! l'_1 \leq\! l_1 \leq\! i_2 \leq\! i'_2 \leq\! j'_2 \leq\! j_2 \leq\! k_2 \leq\! k'_2 \leq\! l'_2 \leq\!l_2$,
\item $i_1 \leq\! i'_1 \leq\! j'_1 \leq\! k'_1 \leq\! j_1 \leq\! k_1 \leq\! l'_1 \leq\! l_1 \leq\! i_2 \leq\! i'_2  \leq\! j_2 \leq\! k_2 \leq\! j'_2 \leq\! k'_2 \leq\! l'_2 \leq\!l_2$,
\item $i_1 \leq\! i'_1 \leq\! j_1 \leq\! k_1 \leq\! j'_1 \leq\! k'_1 \leq\! l'_1 \leq\! l_1 \leq\! i_2 \leq\! i'_2 \leq\! j'_2 \leq\! k'_2  \leq\! j_2 \leq\! k_2 \leq\! l'_2 \leq\!l_2$,
\item $i_1 \leq\! i'_1 \leq\! j_1 \leq\! j'_1 \leq\! k'_1 \leq\! k_1 \leq\! l'_1 \leq\! l_1 \leq\! i_2 \leq\! i'_2 \leq\! j_2 \leq\! j'_2 \leq\! k'_2 \leq\! k_2 \leq\! l'_2 \leq\! l_2$,
\item $k_1 \leq i'_1 \leq l'_1 \leq l_1 \leq i_2 \leq i'_2 \leq l'_2 \leq j_2$,
\item $i_1 \leq i'_1 \leq l'_1 \leq j_1 \leq k_2 \leq i'_2 \leq l'_2 \leq l_2$,
\item $k_1 \leq i'_1 \leq l'_2 \leq l_1$ or $i_2 \leq i'_1 \leq l'_2 \leq j_2$,
\item $j_1 \leq i'_1 \leq l'_1 \leq k_1 \leq j_2 \leq i'_2 \leq l'_2 \leq k_2$,
\item $j_1 \leq i'_1 \leq l'_2 \leq k_1$ or $j_2 \leq i'_1 \leq l'_2 \leq k_2$,
\item $l_1 \leq i'_2 \leq l'_2 \leq i_2$.
\end{enumerate}\end{lemma}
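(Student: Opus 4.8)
The plan is to reduce everything to repeated applications of Lemma~\ref{const-variants}. Recall that a $2$-pump is nothing but a pair of rank-$1$ constituents --- a top $(i_1,l_1,i_2,l_2)$ together with a properly embedded bottom $(j_1,k_1,j_2,k_2)$ --- and that the embedding forced by the pump structure is precisely the chain $i_1\le j_1\le k_1\le l_1\le i_2\le j_2\le k_2\le l_2$, that is, case~4 of Lemma~\ref{const-variants} read with the top as the outer constituent. Hence the data of two $2$-pumps amounts to four rank-$1$ constituents among which two nesting relations --- each bottom inside its own top --- are already fixed, and it remains to enumerate the compatible positions of the other four pairs.

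First I would apply Lemma~\ref{const-variants} to the pair of top constituents $(i_1,l_1,i_2,l_2)$ and $(i'_1,l'_1,i'_2,l'_2)$; after possibly interchanging the two pumps one of its four alternatives holds, and this spends the only symmetry we are allowed. The alternative $l_2\le i'_1$ is already possibility~1. In each of the remaining three --- the second top lying inside a single piece of the first, in the gap between its two pieces, or interleaving with it --- I would apply Lemma~\ref{const-variants} once more to the pair of bottom constituents (now all four of its alternatives together with their mirror images must be allowed, since the pump-swap has been used up) and, whenever relative positions are still undetermined, to the two cross pairs formed by the bottom of one pump and the top of the other. Every such application splits into finitely many sub-cases, but most are discarded at once because they contradict one of the two fixed nesting chains --- for instance a bottom has to lie strictly between the endpoints of its own top, which kills almost all alignments of $j'_1,k'_1,j'_2,k'_2$ against $i_1,l_1,i_2,l_2$. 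Gathering the surviving sub-cases and normalising them leaves exactly the twelve chains of the statement.

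The routine work is then to check, combination by combination, that a prescribed top--top alternative together with a prescribed bottom--bottom alternative forces one of the listed chains (and nothing inconsistent); I would organise this as a table indexed by the pair of cases. The genuine difficulty is bookkeeping: making sure that no admissible combination is quietly lost, and that the without-loss-of-generality reductions --- interchanging the two pumps, and, if one wishes to halve the labour, the left-to-right reflection of the word, which swaps the two pieces of every rank-$1$ constituent and reverses their order --- are invoked consistently and not double-counted, so that the final list is both exhaustive and irredundant. A secondary subtlety, which is what separates several of the later possibilities from one another, is that in the ``gap'' and ``inside a piece'' alternatives one must additionally record where the bottom of the \emph{outer} pump sits relative to the \emph{whole} inner pump, not merely relative to the inner top.
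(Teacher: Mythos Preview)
Your proposal is correct and follows essentially the same approach as the paper: both reduce the lemma to a finite case analysis built on repeated applications of Lemma~\ref{const-variants} to pairs drawn from the four rank-$1$ constituents (two tops, two bottoms), discarding sub-cases that violate the two fixed nesting chains. The only organisational difference is that the paper, after handling the top--top comparison, takes the cross pair (bottom of the outer pump versus top of the inner pump) as its primary second-level discriminator rather than the bottom--bottom pair you propose; this yields a slightly shorter tree of cases but is otherwise the same bookkeeping exercise you describe.
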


Let $\pi_1 = (i_1, j_1, k_1, l_1, i_2, j_2, k_2, l_2)$ and $\pi_2 = (i'_1, j'_1, k'_1, l'_1, i'_2, j'_2, k'_2, l'_2)$ be two $2$-pumps. We call a pair of $\pi_1$ and $\pi_2$ linear if $l_2 \leq i'_1$ or $l'_2 \leq i_1$. We call $\pi_1$ outer for the pump $\pi_2$ if $i_1 \leq i'_1 \leq l'_2 \leq l_2$. Note that if a pair of $2$-pumps is not linear, then one of its elements is the outer pump for another. We call $\pi_1$ embracing for $\pi_2$ if $l_1 \leq i'_1 \leq l'_2 \leq i_2$.

\begin{corollary}\label{pumps-geom}
Let $(i_1, j_1, k_1, l_1, i_2, j_2, k_2, l_2)$ and $(i'_1, j'_1, k'_1, l'_1, i'_2, j'_2, k'_2, l'_2)$ be $2$-pumps such that one of the segments of the second pump is a proper subset of the segment $[l_1; i_2]$. Then either the second pump is outer for the first (which means $i'_1 \leq i_1 \leq l_2 \leq l'_2$) or the first pump is embracing for the second.
\end{corollary}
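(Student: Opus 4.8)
The plan is to feed the two top constituents of the pumps into the classification of pairs of rank-$1$ constituents. Write $A=(i_1,l_1,i_2,l_2)$ for the top constituent of the first pump and $B=(i'_1,l'_1,i'_2,l'_2)$ for that of the second, so that $A$ occupies the maximal segments $[i_1,l_1]$ and $[i_2,l_2]$ with the gap $(l_1,i_2)$ between them, and likewise for $B$. First I would rephrase the hypothesis: the scope segment of $\pi_2$ that is assumed to sit properly inside $[l_1;i_2]$ is a non-empty subinterval of one of the maximal segments $[i'_1,l'_1]$, $[i'_2,l'_2]$ of $B$ (the scope of a $2$-pump is contained in its top constituent), and being non-empty and properly contained it has points strictly inside $(l_1,i_2)$; hence that maximal segment of $B$ meets the gap of $A$ in its interior.

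Next I would apply Lemma~\ref{const-variants} to $A$ and $B$ and walk through its four possibilities (up to swapping $A$ and $B$). If one of the constituents lies entirely to one side of the other, no segment of $B$ meets the gap of $A$, so this is excluded. If $B$ is contained in one of the two segments of $A$, or the two segments of $A$ surround the two segments of $B$, then both maximal segments of $B$ lie inside $[i_1,l_1]\cup[i_2,l_2]$ and can touch the gap of $A$ only in an endpoint $l_1$ or $i_2$, which a segment with interior points cannot do; excluded again. The case where $A$ is contained in one segment of $B$ is disposed of the same way, after noting that the other segment of $B$ then lies beyond $[l_1;i_2]$. What remains are exactly two configurations: $B$ lies inside the gap of $A$, i.e.\ $l_1\le i'_1\le l'_2\le i_2$, which is precisely ``$\pi_1$ embracing for $\pi_2$''; and the two segments of $B$ surround the two segments of $A$, i.e.\ $i'_1\le i_1\le l_1\le l'_1\le i'_2\le i_2\le l_2\le l'_2$, which in particular gives $i'_1\le i_1\le l_2\le l'_2$, i.e.\ ``$\pi_2$ outer for $\pi_1$''. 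Finally I would check that both surviving configurations are genuinely compatible with the hypothesis, so that restricting to them loses nothing.

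Equivalently one could quote Lemma~\ref{pumps-variants} and discard its twelve options one by one, but passing through the top constituents keeps the casework minimal. The real work --- and the only place the argument can go wrong --- is the degenerate bookkeeping: one must make sure that configurations in which a scope interval of $\pi_2$, or one of the segments of $A$ or $B$, collapses to a single position coinciding with $l_1$ or $i_2$ are either ruled out (by agreeing that an empty interval is not ``a segment of the pump'', so that the hypothesis forces a genuine overlap with the interior of the gap) or, where they persist, are checked directly to still fall under ``outer'' or ``embracing''. Once these boundary situations are cleared away, the classification above is the whole proof.
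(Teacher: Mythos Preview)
The paper treats this as an immediate corollary of Lemma~\ref{pumps-variants} and gives no separate argument, so the intended proof is simply to run through the twelve configurations there (in both orderings of the two pumps) and verify each one. Your route through Lemma~\ref{const-variants} applied only to the two top constituents $A=(i_1,l_1,i_2,l_2)$ and $B=(i'_1,l'_1,i'_2,l'_2)$ is a legitimate and more economical alternative --- the conclusion of the corollary involves only these eight indices --- but your case elimination has a slip.

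The problem is the configuration in which $A$ lies entirely inside one maximal segment of $B$, say $i'_1 \le i_1 \le l_2 \le l'_1$. You say this is ``disposed of the same way,'' but your own weakened hypothesis (some maximal segment of $B$ meets the interior of the gap $(l_1,i_2)$) is \emph{satisfied} here: the segment $[i'_1,l'_1]$ contains all of $[i_1,l_2]$ and hence the whole gap. So your criterion does not exclude this case. The repair is immediate and, in fact, no exclusion is needed: in this configuration one already has $i'_1 \le i_1 \le l_2 \le l'_1 \le l'_2$, which is precisely the ``outer'' conclusion, so the right move is to keep this case alongside the ``segments of $B$ surround segments of $A$'' case rather than discard it. (If you prefer to exclude it, revert to the unweakened hypothesis: the only pump segments of $\pi_2$ lying in $[i'_1,l'_1]$ are $[i'_1,j'_1]$ and $[k'_1,l'_1]$, and since $i'_1 \le i_1 \le l_1$ and $l'_1 \ge l_2 \ge i_2$, neither can be a proper subset of $[l_1,i_2]$ outside the degenerate situations you already flag.) You also omit the configuration ``$A$ lies in the gap of $B$'' ($l'_1 \le i_1 \le l_2 \le i'_2$); there neither maximal segment of $B$ reaches $(l_1,i_2)$, so it is genuinely eliminated by your criterion, but it should be listed for completeness.
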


Lemma \ref{pumps-variants} allows us to give some examples of non $1$-DCFLs. The first example is the language $4\mathrm{MIX} = \{w \in \{a, b, c, d\}^* \mid |w|_a = |w|_b = |w|_c = |w|_d\}$.

\begin{theorem}\label{4-MIX}
The language $4$MIX cannot be generated by any $1$-DCFG.
\end{theorem}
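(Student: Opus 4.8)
The plan is to assume $4\mathrm{MIX}$ is a $1$-DCFL and derive a contradiction from the Ogden lemma (Theorem~\ref{Ogden-dcfg}) applied to $w = a^N b^N c^N d^N$ for an enormous $N$, with the middle portions of the four blocks marked. The key structural fact is that in a $1$-DCFG the pumped material is a prefix/suffix pair of each of at most two contiguous chunks $L$ (to the left) and $R$ (to the right): by the context decomposition used in the pumping lemma (Lemma~\ref{direct-synt}), $w = s_0 y_1 u_1 z_1 s_1 y_2 u_2 z_2 s_2$ with $l \le 2$, where $L := y_1 u_1 z_1$ and $R := y_2 u_2 z_2$ are contiguous subwords of $w$ occurring in this order, $y_1,z_1$ being a prefix/suffix of $L$ and $y_2,z_2$ a prefix/suffix of $R$. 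If a bounded such chunk hits a marked position deep inside one block it must lie entirely within that block, so its pumped letters are monochromatic; balancedness of $4\mathrm{MIX}$ then forces the other chunk to carry three distinct letters, which a short contiguous subword of $a^N b^N c^N d^N$ cannot do.

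In detail: suppose $4\mathrm{MIX} = L(G)$ for an $m$-compact $1$-DCFG $G$ and let $n$ be the associated constant. Choose $N$ with $N/4 > n$, set $w = a^N b^N c^N d^N$, and mark every position lying at distance $\ge N/4$ from both ends of its block; this is $2N > n$ marked positions for $N$ large. Applying the Ogden lemma, with the pump taken bounded as in the pumping lemma, gives a decomposition $w = s_0 y_1 u_1 z_1 s_1 y_2 u_2 z_2 s_2$ (with $y_2 = u_2 = z_2 = \epsilon$ when $l = 1$) such that $|y_1 u_1 z_1 y_2 u_2 z_2| \le n$, the word $s_0 y_1^p u_1 z_1^p s_1 y_2^p u_2 z_2^p s_2$ lies in $4\mathrm{MIX}$ for every $p$, and some marked position lies in one of $y_1,z_1,y_2,z_2$.

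Let $\Delta$ be the multiset of letters of $y_1 z_1 y_2 z_2$. Taking $p=1$ shows $w \in 4\mathrm{MIX}$, so $|\Delta|_a = |\Delta|_b = |\Delta|_c = |\Delta|_d =: q$. By symmetry assume the marked position lies in a segment contained in $L$. This position is at distance $> n$ from both ends of its block, and it lies in $L$ with $|L| \le n$; hence all of $L$ — and therefore both $y_1$ and $z_1$ — sits inside a single block $X$. Then $y_1 z_1$ is a nonempty power of $X$, so $q = |\Delta|_X \ge 1$, and for each of the three letters $Y \ne X$ we get $|y_2 z_2|_Y = |\Delta|_Y = q \ge 1$. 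Thus $R$ contains at least three distinct letters; but $|R| \le n < N$, so $R$, being a contiguous subword of $a^N b^N c^N d^N$, meets at most two consecutive blocks and hence contains at most two distinct letters — a contradiction (when $l=1$ this is just the impossibility $R = \epsilon$). The case where the marked segment lies in $R$ is identical with $L$ and $R$ interchanged. Hence $4\mathrm{MIX}$ is not a $1$-DCFL.

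The main obstacle is obtaining a pump that is simultaneously bounded in length and forced through a marked position: the pumping lemma supplies the length bound but no marking, and the Ogden lemma as proved supplies the marking but no explicit bound on the scope of the lifted pump. Reconciling the two is the delicate step — either by re-running the pumping-lemma argument along a branch descending through a marked leaf and locating a bounded-scope matryoshka element on it, or by tracking the successive collapses in the Ogden-lemma proof carefully enough to keep the scope bounded. Once that is in place, what remains is only the short geometric case check of which of $y_1,z_1,y_2,z_2$ carries the marked position and which pair of consecutive blocks each of $L$ and $R$ can meet; the mutual-position classification of Lemma~\ref{const-variants} (or Lemma~\ref{pumps-variants}, should one prefer to argue through two interacting pumps rather than through marking) supplies the needed bookkeeping.
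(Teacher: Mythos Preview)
Your argument hinges on a decomposition $w = s_0 y_1 u_1 z_1 s_1 y_2 u_2 z_2 s_2$ that simultaneously satisfies the length bound $|y_1 u_1 z_1 y_2 u_2 z_2| \le n$ \emph{and} places a marked position inside some $y_i$ or $z_i$. Neither statement in the paper gives you both: the pumping lemma yields the bound but ignores markings, while Theorem~\ref{Ogden-dcfg} is explicitly a \emph{weak} Ogden lemma that guarantees a marked position in the pumped material but carries no length constraint whatsoever. You flag this honestly as ``the main obstacle'', yet the two repairs you sketch do not close the gap. Running the pumping-lemma search along a branch to a marked leaf fails because $m$-compactness only promises a nearby matryoshka via an $l$-heavy path, not one lying on your chosen branch or having the marked leaf in its scope; and tracking collapses in the proof of Theorem~\ref{Ogden-dcfg} does not control scope either --- Lemma~\ref{collapse-pump} tells you the final pump was already a pump in the original tree, but its original scope can be arbitrarily large. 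A full strong Ogden lemma for TALs does exist (Palis--Shende), but the paper does not prove it and notes the argument is hard to generalize; invoking it would make your route work, but would also replace a page of argument by a citation to substantially heavier machinery.

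The paper's proof takes a genuinely different route that needs only the weak Theorem~\ref{Ogden-dcfg}. It passes to $4\mathrm{MIX}\cap(a^+b^+c^+d^+)^2$ (eight blocks rather than your four), chooses wildly unequal block lengths, and applies the weak Ogden lemma \emph{many} times to many well-separated marked segments inside a single block. A counting argument --- using the balance condition of $4\mathrm{MIX}$ to cap how many of these applications can yield pumps of an unwanted type --- forces the existence of a pump with a prescribed block pattern (e.g.\ a $[1,3,6,8]$-pump). Doing this again in another block produces a second pump, and Corollary~\ref{pumps-geom} on the mutual positions of $2$-pumps then yields the contradiction. So where you try to get one pump that is both short and marked, the paper manufactures \emph{two} pumps whose block patterns are controlled, and lets the well-nestedness geometry of Lemma~\ref{pumps-variants} do the work your missing length bound was supposed to do.
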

\begin{proof}
Since wMCFLs are closed under intersection with regular languages, it suffices to prove that the language $4\mathrm{MIX} \cap (a^+b^+c^+d^+)^2$ is not a $1$-DCFL. Assume the contrary, let $t$ be the number from Ogden's lemma applied to this language. Let the word $w = a^{m_1}b^{m_2}c^{m_3}d^{m_4} a^{n_1}b^{n_2}c^{n_3}d^{n_4}$ satisfy the following conditions:
\begin{enumerate}
\item $\min{(m_j, n_j)} \geq t$,
\item $m_1 \geq (3M + 1)(M + t)$, where $M = \max{(m_2, m_4, n_3)}$,
\item $m_4 \geq (n_1 + 1)(n_1 + t)$.
\end{enumerate}

Note that every $2$-pump contains an equal number of $a$-s, $b$-s, $c$-s and $d$-s, and every continuous segment of it consists of identical symbols (we call such segments homogeneous). We enumerate the maximal continuous homogeneous subwords of $w$ from $1$ to $8$. Then every $2$-pump intersects with exactly $4$ of such segments. We call a $[d_1,\ldots, d_l]$-pump a pump intersecting the segments with numbers $d_1, \ldots, d_l$ (and possibly some others).

We select $3M + 1$ segments of length $t$ in the first segment of the word $w$ so, that any two segments are separated by not less than $M$ symbols. By Theorem \ref{Ogden-dcfg} each such segment intersects with some $2$-pump. We want to prove that some of them intersects with a $[1,3,6,8]$-pump. Indeed, any two points from different segments cannot belong to the same $[1,7]$-pump since in this case there is a continuous segment of at least $M+1$ $a$-s in the pump, then the pump contains at least $M+1$ $c$-s, which exceeds the length of the $7$-th segment. It follows that there are at most $M$ $[1,7]$-pumps, by the same arguments there are at most $M$ $[1,2]$-pumps and at most $M$ $[1,4]$-pumps, therefore the number of $[1]$-pumps which are not $[1,3,6,8]$-pumps is less than $3M+1$ which proves the existence of a $[1,3,6,8]$-pump.

By the same arguments there is at least one $[4]$-pump $\pi$, which is not a $[4,5]$-pump. By corollary \ref{pumps-geom} applied to the $[1,3,6,8]$-pump $\pi'$, either $\pi$ is an outer pump for $\pi'$ (which means $\pi$ should be a $[1,4,8]$-pump) or $\pi$ is embraced by $\pi'$. In the first case there are two $d$-segments in the pump, in the second case it should be a $[3,4,5,6]$-pump which contradicts our assumption. So we have reached a contradiction and the theorem is proved. 
\end{proof}
    
Our technique of embedding different $2$-pumps also works in a more complex case. Consider the language $MIX = \{ w \in \{a, b, c\}^* \mid |w|_a = |w|_b = |w|_c \}$. It is expected to be not a $DCFL$ since it demonstrates an extreme degree of unprojectivity. It is proved in \cite{KanazawaSalvati2012} that $MIX$ is not a $2$-wMCFL (and hence not a $1$-DCFL). The proof extensively uses geometrical arguments and is therefore very difficult to be generalized for similar languages or wMCFGs of higher order. Our proof uses only the Ogden's lemma for DCFGs and is much shorter.

\begin{theorem}
The $MIX$ language is not a $1$-DCFL.
\end{theorem}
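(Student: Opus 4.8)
The plan is to follow the scheme of the proof of Theorem~\ref{4-MIX}. Since $1$-DCFLs are closed under intersection with regular languages, it is enough to show that $L = \mathrm{MIX} \cap (a^+b^+c^+)^2$ is not a $1$-DCFL; a single factor does not suffice, since $\mathrm{MIX} \cap a^+b^+c^+ = \{a^nb^nc^n \mid n \ge 1\}$ is a $1$-DCFL. Assume $L$ is generated by a $1$-DCFG and let $t$ be the constant provided by Theorem~\ref{Ogden-dcfg} for $L$. Fix a word $w = a^{m_1}b^{m_2}c^{m_3}a^{n_1}b^{n_2}c^{n_3} \in L$ and number its six maximal homogeneous blocks $1, \dots, 6$ from left to right. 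Exactly as in the $4$MIX argument, every $2$-pump in a derivation tree of $w$ has four pumped pieces, each homogeneous (otherwise pumping destroys membership in $(a^+b^+c^+)^2$), and contains equally many $a$'s, $b$'s and $c$'s, so its pieces meet at least one block for each letter; write $[d_1, \dots, d_r]$-pump for a pump whose pieces meet at least the blocks $d_1, \dots, d_r$.

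Next I would choose the exponents so that two blocks dwarf the rest: require $\min_j (m_j, n_j) \ge t$, let $m_1$ exceed a fixed polynomial in $t$ and $M := \max(m_2, m_3, n_1, n_2, n_3)$, and let one further block (say the second $a$-block $a^{n_1}$) likewise exceed a polynomial in $t$ and the remaining exponents. Inside block $1$ I select a family of $\Theta(M)$ windows of length $t$, pairwise separated by more than $M$ positions; by Theorem~\ref{Ogden-dcfg} each window meets the scope of some $2$-pump, necessarily a $[1]$-pump. A counting argument --- which here already has to invoke the geometric lemma to handle pumps placing two pieces in block $1$ --- then limits the number of distinct shapes that can occur, forcing all but boundedly many of these windows to be served by one ``balanced'' shape whose pieces lie in prescribed blocks in a prescribed left-to-right order, the analogue of the $[1,3,6,8]$-pump. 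Running the same windowing inside the second oversized block produces a second pump of a likewise constrained shape.

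It then remains to confront the two pumps. By Lemma~\ref{pumps-variants} --- or already the weaker Corollary~\ref{pumps-geom} --- once the blow-ups rule out linearity, one pump must be outer for, in particular possibly embracing, the other. Each resulting containment is shown incompatible with the prescribed block patterns and the equality of the three letter counts: the inner pump is forced either to meet a block it cannot meet, or to spread its four pieces over same-letter blocks so as to unbalance the letter counts. Carrying this case analysis through yields the contradiction.

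The step I expect to be the main obstacle is the one genuinely new compared with Theorem~\ref{4-MIX}: over a three-letter alphabet a $2$-pump need not meet four distinct blocks, so it may place two of its pieces inside blocks of the same letter, and counting alone no longer pins down a pump's type. The geometric embedding lemma has to be used --- fed with the rigidity coming from the windows --- already in the counting step and again when excluding these hybrid shapes; it may well be necessary to pass to a higher power $(a^+b^+c^+)^N$ and to iterate the windowing over three blocks rather than two, and the degenerate pumps with an empty pumped piece need separate bookkeeping.
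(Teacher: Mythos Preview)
Your strategy --- intersect with a regular pattern, window oversized blocks via Theorem~\ref{Ogden-dcfg}, then play the resulting pumps off against each other through the embedding lemma --- is the right one, and you correctly identify the genuine obstacle: with three letters every $2$-pump doubles one letter, so windowing a block of $(a^+b^+c^+)^2$ does not single out one pump shape by counting alone. But the proposal remains a sketch, and the hedge toward $(a^+b^+c^+)^N$ and a third windowed block shows the chosen route is not closing.

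The paper resolves the obstacle not by raising the power but by permuting the blocks: it intersects with $a^+b^+c^+b^+c^+a^+$, so the six blocks read $a,b,c,b,c,a$ with the two $a$-blocks at the \emph{extremes}. For $w = a^{m_1}b^{m_2}c^{m_3}b^{n_2}c^{n_3}a^{n_1}$ with $m_1, n_1 \gg M = \max(m_3, n_2)$ and $m_3 \gg n_2 \geq t$, windowing block~$1$ yields a $[1,2,5]$-pump and block~$6$ a $[2,5,6]$-pump; these overlap, so one is outer for the other and hence is a $[1,2,5,6]$-pump $\pi'$. The argument then chains two further pumps. Windowing block~$3$ gives a $[2,3]$-pump $\pi$; by Corollary~\ref{pumps-geom} it is either embraced by $\pi'$ --- impossible, since the gap of $\pi'$ lies inside blocks $2$--$5$ and contains no $a$'s --- or outer for $\pi'$, which makes it a $[1,2,3,6]$-pump. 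Finally $n_2 \geq t$ alone produces a $[4]$-pump $\pi''$, which by the same dichotomy against $\pi'$ becomes a $[1,4,6]$-pump; confronting it with $\pi$ via Lemma~\ref{pumps-variants} forces it to be a $[1,2,4,6]$-pump, and such a pump contains no $c$'s. Contradiction.

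Two ideas are missing from your plan. First, put the two oversized same-letter blocks at the two ends of the word: then the ``embraced'' branch of Corollary~\ref{pumps-geom} always starves the inner pump of $a$'s, so every pump obtained by windowing an interior block is forced to be outer and hence to reach both ends --- this is what kills the case explosion you were anticipating. Second, iterate: the paper chains four pumps, each new one pinned down by embedding against a previously established one, rather than trying to extract a contradiction from just two.
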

\begin{proof}
We use the same method and notation as in the case of $4$MIX language. Again, it suffices to prove that the language $L = MIX \cap a^+b^+c^+b^+c^+a^+$ is not a $1$-DCFL. Let $t$ be the number from Ogden's lemma for $L$. Consider the word $w = a^{m_1}b^{m_2}c^{m_3}b^{n_2}c^{n_3}a^{n_1}$ satisfying the following properties:
\begin{enumerate}
\item $\min{(m_j, n_j)} \geq t$,
\item $m_1 \geq (2M + 1)(M + t)$, where $M = \max{(m_3, n_2)}$,
\item $n_1 \geq (2M + 1)(M + t)$, where $M = \max{(m_3, n_2)}$,
\item $m_3 \geq (n_2 + 1)(n_2 + t)$.
\end{enumerate}

By the same arguments as in Theorem \ref{4-MIX} we establish the existence of $[1,2,5]$- and $[2,5,6]$-pumps. Since they cannot form a linear pair of $2$-pumps, one of them is an outer pump for another, which implies one of them is a $[1,2,5,6]$-pump, we denote this pump by $\pi'$. The condition $m_3 \geq (n_2 + 1)(n_2 + t)$ implies the existence of a $[2,3]$-pump $\pi$. If $\pi$ is embraced by a $[1,2,5,6]$-pump, then it contains no $a$-s, which is impossible, therefore by Corollary \ref{pumps-geom} $\pi$ is an outer pump $\pi'$ and actually a $[1,2,3,6]$-pump.

The condition $n_2 \geq t$ implies the existence of a $[4]$-pump, which is a $[1,4,6]$-pump by the same arguments as in the previous paragraph. To be correctly embedded with the $[1,2,3,6]$-pump it should be a $[1,2,4,6]$-pump but there are no $c$-s in such pump. Hence we reached the contradiction and the $MIX$-language cannot be generated by a $1$-DCFG. The theorem is proved.
\end{proof} 

\section{Conclusions and future work}

We have proved a strong version of the pumping lemma and a weak Ogden lemma for the class of DCFLs which is also the class of well-nested multiple context-free languages. These statements allow us to prove that some languages, like the well-known MIX-language, do not belong to the family of $1$-DCFLs or, in other terms, the family of tree adjoining languages. We hope to adopt the proof for the case of semiblind three-counter language $\{ w \in \{a, b, c\}^* \mid |w|_a = |w|_b = |w|_c, \forall u \sqsubseteq w \: |u|_a \geq |u|_b \geq |u|_c \}$ to prove that a shuffle iteration of a one-word language may lie outside the family of $1$-DCFLs. The author supposes that the technique used in the article will work not also in the case of $2$-pumps, but also in a more complex cases. We hope that our results will help understand better the structure of well-nested MCFLs and, in particular, prove the Kanazawa-Salvati conjecture, which states that MIX is not a well-nested MCFL.

\section{Acknowledgements}

The author thanks Makoto Kanazawa for his helpful suggestions and the anonymous referees of DLT 2014 conference, whose thoughtful comments essentially improved the paper.

\bibliographystyle{splncs_srt}
\bibliography{../../Bibliography/algebra,../../Bibliography/grammars,../../Bibliography/sorokin,../../Bibliography/various,../../Bibliography/books,../../Bibliography/discontinuous_Lambek}

\appendix
\section{Multicontext equivalence}\label{app-term-eq}

A multicontext $C$ is called $k$-correct, if the rank of all its submulticontexts does not exceed $k$. If only the rank of $C$ as well as rank of all its leafs is not greater than $k$, the multicontext is called $k$-essential.

\begin{lemma}
For any $k$-essential multicontext $C$ there is an equivalent $k$-correct multicontext $C'$.
\end{lemma}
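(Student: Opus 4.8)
The plan is to prove the statement by induction on the structure of the multicontext $C$, using the equivalences of Lemma~\ref{term-eq-basic} as rewriting rules. The base case is immediate: a leaf (a variable, a nonterminal, a word of $\Sigma^*$, or the separator $1$) is already $k$-correct whenever it is $k$-essential, since its only submulticontext is itself. For the induction step, suppose $C = C_1 \star C_2$ where $\star \in Op_k$, and $C$ is $k$-essential. By the induction hypothesis applied to $C_1$ and $C_2$ (which are themselves $k$-essential, since their ranks and the ranks of their leaves are bounded by those of $C$), we obtain equivalent $k$-correct multicontexts $C_1'$ and $C_2'$. Since $\sim$ is a congruence, $C \sim C_1' \star C_2'$, so it suffices to deal with the case where both immediate subterms are $k$-correct but the whole expression $C_1' \star C_2'$ may still have rank exceeding... no --- its rank equals $\rk(C) \le k$ by $k$-essentiality, so the only way $C_1' \star C_2'$ fails to be $k$-correct is that the combination is \emph{ill-formed} as a term (the side conditions in clauses 4--5 of the term definition concern only the ranks of the two arguments and the total, both of which are fine), which cannot happen --- wait, I must be more careful.

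The real subtlety is this: $C_1' \star C_2'$ has $\rk(C_1') + \rk(C_2') \le k$ (for $\star = {\cdot}$) or $\le k+1$ (for $\star = \odot_j$), both permitted, and $\rk(C_1'\star C_2') = \rk(C) \le k$, so $C_1'\star C_2'$ \emph{is} a legal $k$-correct term-former at the root. Hence $C_1'\star C_2'$ is already $k$-correct, and we are done. So the induction really does go through with essentially no use of Lemma~\ref{term-eq-basic} --- unless the intended reading of "multicontext" permits subterm ranks larger than $k$ at intermediate stages. I would therefore first re-examine the definitions: a $k$-essential $C$ is built by the multicontext grammar, which I expect to be \emph{more permissive} than the $k$-correct term grammar (allowing, say, $\odot_j$ even when $\rk(x_1) < j$ after cancellation, or intermediate ranks $> k$ that are later reduced). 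The point of the lemma is exactly to show such detours can always be eliminated. So the genuine induction step is: given $C = C_1 \star C_2$ with $C_1, C_2$ replaced by $k$-correct equivalents, if $C_1' \star C_2'$ happens to violate a side condition of term formation, rewrite it using Lemma~\ref{term-eq-basic} into an equivalent expression whose root former is legal and whose arguments are again $k$-essential (indeed of strictly smaller size or lower rank), then recurse.

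Concretely, the problematic configurations are $(\gamma_1 \cdot \gamma_2) \odot_j \gamma_3$ with $j > \rk(\gamma_1)$ (push $\odot_j$ inside the right factor via Lemma~\ref{term-eq-basic}.3, or to the left factor via~.2) and $(\gamma_1 \odot_l \gamma_2) \odot_j \gamma_3$ for the various positions of $j$ relative to $l$ and $l + \rk(\gamma_2)$ (use clauses 4--6 to move $\odot_j$ past $\odot_l$), plus the degenerate cases handled by clauses 7--8 when a separator gets isolated. In each case the rewrite strictly decreases a suitable well-founded measure --- e.g. the multiset of depths of nodes whose rank exceeds $k$ together with the number of "misplaced" intercalations --- so the process terminates, and every step preserves $\sim$ by congruence. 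The main obstacle is bookkeeping: setting up a termination measure that simultaneously controls (i) subterms of rank $> k$ and (ii) intercalation connectives applied at an index not yet licensed, and checking that each of the six intercalation-reassociation rewrites lowers it; once that measure is fixed, the rest is a routine, if tedious, case analysis. I would present the measure first, then treat the $\odot$-over-$\cdot$ cases, then the $\odot$-over-$\odot$ cases, relegating the verification that ranks stay bounded by $k$ in the rewritten expression (which follows from the rank arithmetic in Lemma~\ref{term-eq-basic}) to a short remark.
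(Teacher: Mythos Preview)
Your structural induction does not get off the ground: the claim that the immediate subterms of a $k$-essential multicontext are themselves $k$-essential is false. If $C = C_1 \odot_j C_2$ with $\rk(C_2) = 0$, then $\rk(C_1) = \rk(C) + 1$, which already violates the bound; so there is no $k$-correct $C_1'$ supplied by the induction hypothesis, and the ``repair by rewriting at the root'' you describe afterwards has nothing to act on. You sense this (``unless the intended reading of `multicontext' permits subterm ranks larger than $k$''), but you never actually abandon the structural recursion, and your list of ``problematic configurations'' is keyed to the case split $j \le \rk(\gamma_1)$ versus $j > \rk(\gamma_1)$ in Lemma~\ref{term-eq-basic} rather than to the actual obstruction, which is a subterm of rank exceeding~$k$.

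The paper's argument drops structural induction entirely and works globally. The measure is the pair $(K, \text{number of rank-}K\text{ nodes})$, where $K$ is the maximal rank of any submulticontext, ordered lexicographically. Pick a rank-$K$ node $C_1$ of \emph{minimal depth}. Because $\rk(C) < K$ it has a parent $C_2$, and by minimality $\rk(C_2) < K = \rk(C_1)$, which forces $C_2 = C_1 \odot_j E$ with $\rk(E) = 0$. Now case on the top connective of $C_1$ (either $C_3 \cdot C_4$ or $C_3 \odot_l C_4$) and use clauses 2--6 of Lemma~\ref{term-eq-basic} to push $\odot_j E$ one level down; since $\rk(E) = 0$, each rewrite destroys the rank-$K$ node at $C_1$ and creates no new one, so the measure strictly drops. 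A small cleanup using clause~7 handles stray $1$-leaves when $k = 0$. The rewriting idea you gesture at in your last paragraph is the right one, but the crucial point you are missing is \emph{where} to rewrite: at the parent of a highest-rank node deep inside the tree, with the measure above guaranteeing termination --- not at the root after an (unavailable) recursive call.
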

\begin{proof}
At first we prove that there is an equivalent multicontext with no internal submulticontexts of rank greater than $k$. Let $K$ be the maximal rank of submulticontexts in $C$, a submulticontext occurrence is called heavy if the rank of the corresponding submulticontext equals $K$. We use induction on $K$ and the number of heavy submulticontext occurrences.

Let $(C_1, v_1)$ be an occurrence a submulticontext $C_1$ of rank $K$ with minimal depth among all such occurrences. $v$ cannot be the root of $tree(C)$ since $rk(C) < K$ so let $(C_2, v_2)$ be the submulticontext occurrence corresponding to the parent of $v_1$ in the syntactic tree. Then $\rk(C_2) < rk(C_1)$ which implies that $C_2 = C_1 \odot_j E$ for some $0$-ranked multicontext $E$. We transform $C_2$ to an equivalent submulticontext $C'_2$ with less occurrences of heavy submulticontexts. The transformation uses the equivalences from Lemma \ref{multicontext-eq}.

Consider the possible structure of the multicontext $C_1$. First, let it has the form $C_1 = C_3 \cdot C_4$. If $rk(C_3) \leq j$ then the multicontext $C'_2 = (C_3 \odot_j E) \cdot C_4$ is equivalent to $C_2$ and has less occurrences of submulticontexts of rank $K$ because we have removed the occurrence of submulticontext $C_1$ and haven't add any other heavy multicontexts. In case $j > \rk(\beta)$ the multicontext $C'_2 = C_3 \cdot (C_4 \odot_{j-\rk(C_3)} E)$ does the same job.

Now let $C_1$ have the form  $C_1 = C_3 \odot_l C_4$. If $j < l$ then we define $C'_2 = (C_3 \odot_j E) \odot_{l-1} C_4$. In case $l \leq j < l + \rk(C_4)$ we set $C'_2 = C_3 \odot_l (C_4 \odot_{j-l+l} E)$ and in case $j \geq l + \rk(\beta)$ we define $C'_2 = (C_3 \odot_{l+\rk(C_4) - 1} E) \odot_j C_4$. In all the cases $C'_2$ is equivalent to $C_2$ by lemma \ref{multicontext-eq} and has fewer occurrences of heavy submulticontexts.

Since $C_2$ was a submulticontext of $C$, there is a context $C_0$ such that $C = C_0[C_2]$. Then the multicontext $C'' = C_0[C'_2]$ is equivalent to $C$ and has fewer occurrences of heavy submulticontexts. We can apply the induction hypothesis to $C''$ and obtain the required multicontext $C'$.

In case $k \geq 1$ the proof is completed since the rank of elements of $\Sigma_1$ cannot be greater then $1$ and the ranks of other atomic submulticontexts are smaller than $k$ by theorem conditions. In the case $k=0$ we need a minor complication since some leaves of $tree(C')$ might be labeled by $1$. However, they all occur in submulticontexts of the form $1 \odot_1 E$ for some $E$ of rank $0$ since $C$ has no internal submulticontexts of positive rank. If we replace all submulticontexts of the form $1 \odot_1 E$ by the corresponding multicontext $E$, we obtain an equivalent $0$-correct multicontext. The lemma is proved.
\end{proof}

\section{Modifying derivation trees}\label{app-rules-modify}

\begin{definition}
A $k$-DCFL $G'$ is $m$-compact if for every word $w$ there is a derivation tree $T_w$ such that for every node $v$ of positive rank $l$ in $T_w$ there is an element $v'$ of $l'$-matryoshka for some $l' \geq l$, such that the length of the path between $v$ and $v'$ is not greater than $m$ and all the nodes in this path has rank $l$ or greater.
\end{definition}

\begin{theorem}\label{modify-rules-proof}
For every $k$-DCFL $G$ there is an equivalent $k$-DCFL $G'$, which is $m$-compact.
\end{theorem}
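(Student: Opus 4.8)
The plan is to make the grammar $l$-duplicated for every $l$ simultaneously and then, for each word, to pick a derivation tree of minimal size and show that it is automatically compact, with a constant $m$ depending only on the finitely many nonterminals.

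First I would apply Lemma~\ref{l-duplication} in turn for $l=k$, then $l=k-1$, down to $l=1$, obtaining an equivalent Chomsky normal form grammar $G'$. Processing the ranks in decreasing order is essential: an application for a value $l$ only adds rules whose right-hand side is an $(l-1)$-correct term, hence without any nonterminal or binary node of rank $\geq l$, and such a rule cannot be an obstruction to $l'$-duplication for $l'\geq l$ (an $l'$-redundant term that is not already $(l'-1)$-correct carries a binary node of rank $l'$). So $G'$ is $l$-duplicated for all $l\in\overline{1,k}$ at once, and by Lemma~\ref{l-duplication} it keeps the nonterminals of positive rank, so the counts $N_l$ $(l>0)$ and $N_+=\max\{N_l\mid l>0\}$ are those of $G$. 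I would take $m$ to be any bound of order $k\,2^{N_+}$; its precise value is immaterial.

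For $w\in L(G')=L(G)$ let $T_w$ be a derivation tree of $w$ in $G'$ with the fewest nodes; the claim is that $T_w$ witnesses $m$-compactness. The key consequence of minimality is a \emph{no flattenable vicinity} property: if a vicinity $U$ in $T_w$ at rank $l''$ is matryoshka-free, then, being an $l''$-internal binary subtree with no chain of $N_{l''}+1$ equal-rank nodes, it has depth at most $N_{l''}$ and boundedly many leaves; and if in addition its cap has rank $l''-1$ and all its leaves have rank below $l''$, then by Lemma~\ref{deriv-decompose} the cap yields a derivable rule $\widehat A\to\beta$ with $\beta$ an $l''$-internal, $l''$-essential term of depth at most $N_{l''}+1$, i.e.\ an $l''$-redundant one. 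Then $l''$-duplication of $G'$ provides an equivalent derivable rule $\widehat A\to\beta'$ with $\beta'$ an $(l''-1)$-correct term; binarising it gives a derivation $\widehat A\vdash\beta$ using no binary node of rank $\geq l''$, and substituting it into $T_w$ (Lemma~\ref{term-eq-basic} makes this surgery legitimate) strictly decreases the number of nodes --- impossible. Now, given a node $v$ of rank $l>0$, I would descend from $v$ inside the $l$-heavy region, always to a child of rank $\geq l$. Along a downward path the rank rises by at most one, and only at an intercalation whose second argument has rank $0$; combining this with the previous property one shows that within $\sum_{j\geq l}N_j\leq m$ steps either the vicinity of a node reached has a branch long enough to contain a $j$-matryoshka for some $j\geq l$ --- whence $v$ is joined to its top by an $l$-heavy path of length $\leq m$ --- or the $l$-heavy region below $v$ has been exhausted without a matryoshka, which forces one of the vicinities passed to be flattenable, contradicting minimality. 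Either way $v$ is compact, so $G'$ is $m$-compact.

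The step I expect to be the real obstacle is the last one: turning the informal descent into a rigorous case analysis of how the rank moves along paths, and checking that a matryoshka-free $l$-heavy region must contain a vicinity that is not only matryoshka-free but genuinely flattenable --- in particular $l''$-essential, so that none of its leaves carries rank $\geq l''$ --- and that the accumulated vicinity-depth bounds $N_l$ collapse into a single $m$ that does not depend on $w$. The remaining ingredients, namely that the flattened rule yields a shorter derivation and that the equivalences of Lemma~\ref{term-eq-basic} validate the changes to the derivation tree, are routine.
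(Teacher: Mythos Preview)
Your approach is genuinely different from the paper's. The paper does \emph{not} pick a minimal tree; it builds a chain of grammars $G_{k+1}=G,\,G_k,\ldots,G_1=G'$ (applying Lemma~\ref{l-duplication} once per level) and, for each word, constructs the good derivation tree in $G_l$ explicitly from the one in $G_{l+1}$ by flattening every rank-$l$ vicinity that lacks both a matryoshka and a node of higher rank. It then proves separately that this vicinity property, together with the inductive hypothesis at ranks $>l$, yields $m_l$-compactness with $m_l=m_{l+1}+2N_l$. Your single-shot minimality argument is attractive, but as written it has two real gaps.

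First, the claim that the surgery ``strictly decreases the number of nodes'' is false. Take $l''=2$ and $\beta=(B_1\cdot B_2)\odot_1 B_3$ with $\rk(B_1)=\rk(B_2)=1$, $\rk(B_3)=0$: this is $2$-redundant, and the equivalent $1$-correct term is $(B_1\odot_1 B_3)\cdot B_2$, with exactly the same number of nodes. In general the transformations of Lemma~\ref{term-eq-basic} merely reassociate and do not shrink the syntax tree. A workable fix is to minimise not the total node count but the lexicographic vector $(\#\text{rank-}k\text{ internal nodes},\ldots,\#\text{rank-}1\text{ internal nodes})$: flattening at rank $l''$ removes all rank-$l''$ internal nodes from the vicinity and introduces only nodes of strictly smaller rank, so it strictly decreases this vector.

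Second, and this is the part you yourself flag, the descent argument does not go through even granting the right minimality. A matryoshka-free vicinity at rank $l$ is only flattenable when it is $l$-\emph{essential}, i.e.\ its root and all its leaves have rank $<l$; but the root of a rank-$l$ vicinity can perfectly well have rank $>l$ (the parent of a rank-$l$ node in Chomsky normal form can have rank up to $2l$), and then no contradiction with minimality arises. In that situation you must go \emph{up} to the root and appeal to compactness at the higher rank --- precisely the inductive structure the paper keeps and you discarded. Consequently ``descending to a child of rank $\ge l$'' cannot by itself exhaust the cases, and the bound you need is the paper's linear $m=\sum_l 2N_l$, not anything of order $2^{N_+}$. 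To make your route succeed you would have to reinstate a rank-by-rank induction on top of the (corrected) minimality, at which point the argument becomes essentially the paper's, only phrased as ``a minimal tree has no bad vicinity'' instead of ``repeatedly remove bad vicinities''.
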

\begin{proof}
We inductively construct the grammars $G_{k+1} = G, G_k, \ldots, G_1 = G'$ satisfying the following properties: 
\begin{enumerate}
\item All these grammars are equivalent.
\item For any grammar $G_l$ there exists a natural number $m_l$ such that for any word $w \in L(G_l)$ there is a derivation tree $T_{w,l}$ of the grammar $G_l$, satisfying the following properties: for any node $v'$ of rank $l' \geq l$ in this tree there is an element $v''$ of some $l''$-matreshka connected with $v'$ by an $l$-heavy path, whose length is not greater than $m_l$. 
\end{enumerate}

We set $m_{k+1} = 0$ and take $m = m_1$. For $G_{k+1}$ the conditions specified are trivial since there are no nonterminals of rank $k+1$ or greater. The grammar $G_l$ is obtained from $G_{l+1}$ by duplicating all the derivable rules in $G_l$, whose right sides are $l$-redundant. The duplication process is the same as in Lemma \ref{l-duplication}. We also set $m_l = m_{l+1} + 2N_l$ where $N_l$ is the number of nonterminals of rank $l$ in $G_{l}$ (or $G_{l+1}$ since these numbers are equal by Lemma \ref{l-duplication}). The lemmas below justify that all the constructed grammars satisfy the desirable properties.
\end{proof}

In the further we fix some number $l$ and consider only the derivations in the grammar $G_l$.

\begin{lemma}\label{compact-old-lemma}
For any $w \in L(G)$ there is a $S$-derivable term $\alpha$ such that $w = \mu(\alpha)$ with the derivation $\deriv{D_l}{S}{\alpha}$ whose tree $T_{D,l}$ satisfy the following properties:
\begin{enumerate}
\item The vicinity of any node $v$ of rank $l$ either contains a node of greater rank or an element of some $l$-matreshka.
\item Any node $v'$ of rank $l' > l$ is connected with some element $v''$ of some $l''$ -matreshka with $l'' \geq l'$ by an $l'$-heavy path whose length does not exceed $m_{l+1}$.
\end{enumerate}
\end{lemma}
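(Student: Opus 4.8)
The plan is to proceed by induction on the size of the derivation tree, eliminating "bad" configurations one at a time while controlling that the elimination terminates. Starting from an arbitrary derivation $\deriv{D}{S}{\alpha}$ of $w$ in $G_l$, I would first apply the inductive hypothesis for level $l+1$ (available because $G_l$ and $G_{l+1}$ share all nonterminals of rank $\geq l$, and $G_l$ only added rules with nonterminals of rank $<l$, so any $G_{l+1}$-derivation tree is also a $G_l$-derivation tree): this already gives property~2 of the statement, since every node $v'$ of rank $l'>l$ is connected by an $l'$-heavy path of length at most $m_{l+1}$ to an element of an $l''$-matryoshka with $l''\geq l'$. The only thing left to arrange is property~1, namely that the vicinity of every rank-$l$ node either reaches up to a node of strictly greater rank or contains an element of an $l$-matryoshka.

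The key observation is the following. Consider a rank-$l$ node $v$ whose vicinity $U_v$ contains no node of rank $>l$ and no $l$-matryoshka. By definition of vicinity, $U_v$ is an inherent subtree all of whose internal nodes have rank $l$, whose root is either the whole tree's root or a node of rank $\neq l$, and whose leaves are either leaves of $T$ or of rank $\neq l$; since there is no rank-$l$ matryoshka inside $U_v$, every branch of $U_v$ passing through rank-$l$ nodes has length at most $N_l$. Hence $U_v$, read as a term $\beta$ with root nonterminal $B$, is $l$-internal with $d(\beta)\leq N_l+1$; because it also contains no node of rank $>l$ and $\mathrm{rk}(B)=l$... wait — here is the subtlety: $B$ itself has rank $l$, so $\beta$ is $l$-internal but the root has rank $l$, not rank $<l$. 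I would instead look at the parent of the root of $U_v$, or more carefully at a maximal such vicinity sitting directly below a higher-rank node, so that the relevant redundant term appears as a \emph{direct subterm} that can be isolated. The point is that the rule producing this chunk is (equivalent to) an $l$-redundant rule $A\to\beta$, so by $l$-duplication ($G_l$ is $l$-duplicated by construction, as in Lemma \ref{l-duplication}) there is an equivalent derivable rule $A\to\beta'$ with $\beta'$ an $(l-1)$-correct term. Replacing the subtree by the binarized derivation of $A\to\beta'$ removes all rank-$l$ internal nodes from that region, strictly decreasing the number of rank-$l$ internal nodes in the tree while keeping $w$ unchanged and not creating new rank-$l$ nodes anywhere. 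Iterating, the process terminates, and the bound $m_l=m_{l+1}+2N_l$ accounts for the at most $N_l$ rank-$l$ nodes above a given node inside a vicinity plus the $m_{l+1}$ to climb from the top of the vicinity to a matryoshka element at the next level.

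The main obstacle I anticipate is the bookkeeping at the boundary between ranks: precisely formulating which maximal $l$-internal chunk to extract so that it genuinely corresponds to an $l$-redundant \emph{rule} (root nonterminal of rank $l$, all internal nodes rank $l$, $l$-essential, depth $\leq N_l+1$), and verifying that after duplication no rank-$l$ node is reintroduced and that the path-length estimate $m_l$ is respected uniformly over all nodes and all words. One must also check that the surgery does not destroy the already-established property~2 — but this is immediate, since the replaced subtree contains no node of rank $>l$ at all, so no rank-$(>l)$ node and no matryoshka of level $>l$ is affected, and the paths witnessing property~2 for higher-rank nodes pass entirely above the modified region. The remaining details (induction on tree size, termination, the arithmetic of $m_l$) are routine, so I would relegate the full verification to the lemmas following this one in the appendix, exactly as the paper structures it.
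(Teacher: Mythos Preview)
Your plan is essentially the paper's: start from a $G_{l+1}$-derivation tree (which already gives property~2), locate each rank-$l$ vicinity that violates property~1, replace it using $l$-duplication by an equivalent $(l-1)$-correct derivation, and iterate; property~2 survives because the replaced region contains no node of rank greater than $l$. Your one wobble is a misreading of the vicinity definition: the root of $U_v$ is by definition \emph{not} of rank $l$ (or is the tree root, of rank $0$), so when property~1 fails the root label $B$ already has rank strictly below $l$ and the associated term $\beta$ is directly $l$-essential and $l$-redundant --- no detour through the parent is needed.
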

\begin{proof}
We start from the derivation tree $T_{D, l+1}$ of $w$ in the grammar $G_{l+1}$. This tree remains a correct derivation tree in $G_l$ and satisfies the second property by induction hypothesis. We want to reduce the number of nodes of rank $l$ whose vicinity does not satisfy the first statement of the lemma, preserving the second property.
Let $N_l$ be the number of nonterminals of rank $l$ in the grammar $G_l$.

Let $T$ be a tree obtained on some stage of this process and $v$ be its node of rank $l$ whose vicinity $U_v$ violates the first property. Consequently, $U_v$ contains no subbranches with more than $N_l$ consecutive nodes of rank $l$, which means that $d(U_v) \leq N_l + 1$ since all internal nodes of $U_v$ has rank $l$. Additionally, since the second part of the first property is incorrect, the rank of all $U_v$ leaves is less than $l$, as well is the rank of its root. Let $B$ denote the nonterminal label of the root of $U_v$. 

Hence, the term $\beta = C_{\beta}[B_1, \ldots, B_t]$ corresponding to $U_v$ is $l$-essential ($C$ denotes the skeleton of $B$). By definition, the rule $B \to \beta$ is $l$-redundant in $G_l$; since $G_l$ has the same nonterminals of rank $l$ as $G_{l+1}$, this rule was also redundant in $G_{l+1}$. Therefore by construction $G_l$ has an equivalent derivable rule $B \to \beta'$ for some $(l-1)$-correct term $\beta'$. Recall that $\beta' = C_{\beta'}[B_1, \ldots, B_t]$ for some ground multicontext $C_{\beta'}$ equivalent to $C_{\beta}$.

Let $\alpha$ denote the term derivable by the tree $T$. Then $\alpha = C_0[C_{\beta}[\beta_1, \ldots, \beta_t]]$ for some terms $\beta_1, \ldots, \beta_t$ derivable from $B_1, \ldots, B_t$ respectively. Consider the term $\alpha' = C_0[C_{\beta'}[\beta_1, \ldots, \beta_t]]$, it is also derivable from $A$ since the rule $B \to \beta'$ is derivable in $G_l$ by construction and the remaining derivation is the same. It also derives the word $w$, because term equivalence is a congruence relation. Since $\beta'$ is $(l-1)$-essential, we have removed a node with incorrect vicinity, so it remains to show that the second property is preserved.

Let $T'$ be the derivation tree of $\alpha'$, it is obtained from $T$ by replacing the vicinity $U_v$ with the derivation tree of $\beta'$ from $B$. Consider the $l'$-heavy path connecting some node $v'$ of rank $l' > l$ with an element $v''$ in some matreshka of rank $l'' \geq l'$ in the old tree $T$. This path cannot traverse $U_v$ since all the leaf nodes of $U_v$ are of smaller rank. Since $G_l$ contains the same number of nonterminals of rank $l$ and greater as $G_{l+1}$ does, $v''$ remains an element of an $l''$-matreshka in $T'$. Choosing the same $l'$-heavy path as in $T$, we provide the second property. Repeating the described procedure, we also provide the first property, so the lemma is proved.
\end{proof}

\begin{lemma}
The tree $T_{D,l}$ constructed in the previous lemma also satisfies the following property: for any node $v$ of rank $l$ it is connected with some element $v'$ of an $l'$-matreshka by an $l$-heavy path, whose length is at most $m_{l+1}$.
\end{lemma}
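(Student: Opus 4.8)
The plan is to deduce the claim entirely from the two statements already established for $T_{D,l}$ in Lemma \ref{compact-old-lemma}, by a case analysis on how far the node $v$ sits from the part of its vicinity whose rank exceeds $l$. Fix a node $v$ of rank $l$ and let $U_v$ be its vicinity. I would first look at the maximal subbranch of rank-$l$ direct descendants ascending from $v$ inside $U_v$ and at the maximal such subbranch descending from $v$; the whole argument hinges on whether one of these reaches length $N_l + 1$.

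If it does, that subbranch is by definition an $l$-matreshka and $v$ is one of its elements --- its bottom element in the ascending case, its top element in the descending case. Then $v$ is joined to an element of an $l$-matreshka by the trivial path of length $0$, which certainly does not exceed $m_{l+1}$, and there is nothing more to prove.

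Otherwise every chain of rank-$l$ direct descendants through $v$ is shorter than $N_l + 1$, and then the matreshka alternative of the first statement of Lemma \ref{compact-old-lemma} is impossible for $U_v$: since all internal nodes of $U_v$ are of rank $l$, a matreshka inside $U_v$ would, through the interior of $U_v$, produce a long rank-$l$ subbranch comparable to $v$. Hence the first statement forces $U_v$ to contain a node $w$ of rank $l' > l$. The path joining $v$ to $w$ through the interior of $U_v$ is then $l$-heavy, and the second statement of Lemma \ref{compact-old-lemma} supplies an $l'$-heavy --- a fortiori $l$-heavy --- path of length at most $m_{l+1}$ from $w$ to an element $v'$ of some $l''$-matreshka with $l'' \geq l'$. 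Concatenating the two pieces yields the desired $l$-heavy path from $v$ to $v'$.

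The main obstacle is precisely the length accounting in this last case: a priori the concatenation is bounded only by $m_{l+1}$ plus the number of rank-$l$ nodes traversed from $v$ to $w$, whereas the statement asserts the total is at most $m_{l+1}$. Reconciling these --- i.e. showing the rank-$l$ detour to $w$ is genuinely absorbed rather than added --- is the crux. I would attack it by taking $w$ to be the \emph{first} node of rank exceeding $l$ met along a fixed branch issuing from $v$ and charging its rank-$l$ prefix against a matreshka: the point is that once this prefix reaches $N_l + 1$ nodes it is itself an $l$-matreshka through $v$, returning us to the distance-$0$ situation of the second paragraph, so the configurations that would inflate the bound collapse back into the first case. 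Making this trade-off uniform over all rank-$l$ nodes, so that the witnessing matreshka is supplied either at distance $0$ by a long rank-$l$ subbranch through $v$ or by the second-statement path of the nearest higher-rank node, is the delicate step on which the exact bound $m_{l+1}$ rests.
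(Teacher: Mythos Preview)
Your diagnosis of the obstacle is exactly right, and in fact it is fatal to the bound as stated: the concatenation of the rank-$l$ walk from $v$ to the higher-rank node $w$ with the $l'$-heavy path supplied by the second statement of Lemma~\ref{compact-old-lemma} is only bounded by $m_{l+1}$ \emph{plus} the length of the rank-$l$ walk, and that extra length does not disappear. Your proposed trade-off does not close the gap. If the rank-$l$ prefix from $v$ toward $w$ has, say, exactly $N_l$ nodes, it is too short to be an $l$-matreshka, so you do not ``collapse back into the first case''; yet it contributes a genuine additive $N_l$ to the total. Worse, $w$ need not lie on a single branch through $v$: when the root of $U_v$ has rank below $l$, the only higher-rank nodes are leaves of $U_v$, and the path from $v$ to such a leaf goes up to a common ancestor and back down, costing up to $2N_l$ rank-$l$ edges. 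There is also a smaller slip in your case split: ruling out long rank-$l$ chains \emph{through} $v$ does not rule out an $l$-matreshka elsewhere in $U_v$, so the first alternative of Lemma~\ref{compact-old-lemma} is not excluded as you claim.

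The resolution is that the bound $m_{l+1}$ in the statement is a misprint for $m_l = m_{l+1} + 2N_l$; the paper's own proof establishes only $m_l$, and indeed this is exactly what the inductive scheme in Theorem~\ref{modify-rules-proof} requires for nodes of rank $l$. With the corrected bound the argument is a straightforward case analysis very close to yours: either $v$ lies deep in $U_v$ (then the branch above $v$ is already an $l$-matreshka), or the root of $U_v$ has rank above $l$ (then go up at most $N_l$ steps to the root and invoke the second statement, total $\leq m_{l+1}+N_l$), or the root has rank below $l$ and $U_v$ is deep (then the unique rank-$l$ child $v_1$ of the root is a matreshka element within $N_l$ of $v$), or $U_v$ is shallow and contains a higher-rank leaf (go up to $v_1$ and down to that leaf, at most $2N_l$ steps, then invoke the second statement). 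So your overall strategy is correct once you stop chasing the unachievable bound $m_{l+1}$.
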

\begin{proof}
For every node $v$ of rank $l$ we consider its vicinity $U_v$. There are two possibilities: the depth of $v$ in $U_v$ is greater than $N_l$ and it is at most $N_l$. In the first case there is subbranch of length at least $N_l+1$ which contains $v$ and consists only of internal nodes of $U_v$. Hence $v$ is the element of an $l$-matreshka itself and satisfies the requirements of the lemma. 

Now the depth of $v$ is not greater than $N_l$. If the root of $U_v$, which we denote by $v_0$, is of rank greater than $l$, then the distance between $v$ and $v_0$ is at most $N_l$ and all the intermediate nodes are of rank $l$. Extending this path by a sequence of nodes with rank $l+1$ and greater from $v_0$ to its closest element of $l'$-matreshka with $l' > l$, we obtain the path from $v$ to the same matreshka element. Note that its length is not greater then $m_{l+1} + N_l \leq m_l$.

If the rank of $v_0$ is less then $l$ then it cannot have two children of rank $l$, but has only one such child $v_1$. All other nodes of rank $l$ in $U_v$ are direct descendants of $v_1$. If $d(U_v) \geq N_{l} + 2$, then $v_1$ is an element of an $l$-matreshka. Since the number of nodes between $v_1$ and $v$ is at most $N_l < m_l$, the requirements of lemma are again satisfied.

The only remaining case is when $d(U_v) < N_{l} + 2$ and $U_v$ contains a node of rank $l+1$ or greater. Let $v_2$ be such a node and $l'$ be its rank. Then $v_2$ is a descendant of $v_1$ and the distance between these nodes is at most $N_l + 1$. Therefore the path between $v$ and $v_2$ consists of at most $2N_l$ edges and intermediate nodes are of rank $l$. By Lemma \ref{compact-old-lemma} $v_2$ is connected by an $l'$-heavy path of length at most $m_{l+1}$ with an element $v''$ of an $l''$-matreshka for some $l'' > l'$. Then the distance between $v$ and $v''$ is not greater than $m_{l+1} + 2N_l = m_l$ which proves the final case. The lemma is proved.
\end{proof}

These two lemmas imply Theorem \ref{modify-rules-proof}. 

\section{Constituents in displacements context-free grammars}

This section we discuss the geometrical interpretation of constituents in displacement context-free grammars. A constituent is a (possibly discontinuous) fragment of a word derived from a node of its derivation tree. The nonterminal label of this node is the label of the constituent. In the basic context-free case the constituents are just continuous subwords, so every constituent is completely defined by two indexes $i, j$: the position of its first symbol and the position of its last symbol plus one (we add one to deal with empty constituents). Different constituents should satisfy the embedding conditions: either one of them is inside the other ($[i; j] \subseteq [i'; j']$ or $[i'; j'] \subseteq [i; j]$ in terms of indexes), or they do not have common internal points ($[i; j] \cap [i'; j']$ is one of the sets $\emptyset, \{i\}, \{j\}$). Two principal variants for mutual positions of different constituents are shown on the picture below.

\begin{center}
\scalebox{1.0}{
\begin{picture}(270,85)
\qbezier(5,50)(45,90)(85,50)\qbezier[40](20,50)(45,70)(70,50)
\put(42, 28){$B$}\put(42, 13){$A$}
\qbezier(5,50)(5,50)(42,23)\qbezier(85,50)(85,50)(48,23)
\qbezier(20,50)(20,50)(42,38)\qbezier(70,50)(70,50)(48,38)
\qbezier(125,50)(155,90)(185,50)\qbezier[40](195,50)(225,90)(255,50)
\put(152, 13){$B$}\put(222, 13){$A$}
\qbezier(125,50)(125,50)(152,23)\qbezier(185,50)(185,50)(152,23)
\qbezier(195,50)(195,50)(222,23)\qbezier(255,50)(255,50)(222,23)
\end{picture}}
\end{center}

Let us now inspect the constituent structure of $1$-DCFGs. In the case of these grammars every constituent is either a continuous subword, if its label is of rank $0$, or a word of the form $w_1 1 w_2$ where $w_1$ and $w_2$ are continuous segments of the derived word $w$, if the label is of rank $1$. We focus our attention on the latter case because nothing has changed from the context-free case for the constituents of rank $0$. Then the first continuous part of the constituents is described by indexes $i_1, j_1$ and the second part by indexes $i_2, j_2$. Therefore every constituent of rank $1$ corresponds to a tuple $(i_1, j_1, i_2, j_2)$ of its indexes taken in the ascending order. We will not distinguish constituents and their index tuple in the further.

The following lemma about mutual positions of different constituents was proved in \cite{Sorokin2013STUS} in a more general case.

\begin{lemma}\label{const-variants-proof}
One of the possibilities below hold without loss of generality for any pair of constituents $(i_1, j_1, i_2, j_2)$ and $(i'_1, j'_1, i'_2, j'_2)$: 
\begin{enumerate}
\item $j_2 \leq i'_1$,
\item $j_1 \leq i'_1 \leq j'_2 \leq i_2$,
\item $i_1 \leq i'_1 \leq j'_2 \leq j_1$ or $i_2 \leq i'_1 \leq j'_2 \leq j_2$,
\item $i_1 \leq i'_1 \leq j'_1 \leq j_1 \leq i_2 \leq i'_2 \leq j'_2 \leq j_2$.
\end{enumerate}\end{lemma}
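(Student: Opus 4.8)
\textbf{Proof proposal for Lemma~\ref{const-variants-proof}.}

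The plan is to reduce the statement to a structural induction on the derivation tree, using the fact that a constituent of rank $1$ corresponds to a node whose associated term has rank $1$, and that the two continuous pieces $w_1, w_2$ of such a constituent are exactly the portions of the derived word lying to the left and to the right of the single separator. I would start by fixing the two nodes $v$ and $v'$ carrying the constituents $(i_1,j_1,i_2,j_2)$ and $(i'_1,j'_1,i'_2,j'_2)$, and first dispose of the case where $v$ and $v'$ are incomparable in the tree: then by Lemma~\ref{direct-synt} (or its underlying observation that a constituent fits into one ``gap'' of the other's complement) the two constituents occupy disjoint regions, and depending on which of the three gaps of the first constituent's complement (before $i_1$, between $j_1$ and $i_2$, after $j_2$) the second one falls into, one gets case~1, case~2, or a reflected version of case~1 covered by ``without loss of generality''. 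The substantive content is the comparable case, say $v'$ a descendant of $v$, which is handled by tracking how a rank-$1$ subconstituent sits inside its rank-$1$ ancestor.

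For the comparable case, the key step is to analyze, along the direct-descendant path from $v$ down to $v'$, how the separator and the two segments of the descendant constituent are embedded. By Lemma~\ref{direct-synt}, if $v'$ is a \emph{direct} descendant of $v$ (same rank, all intermediate nodes of rank $1$), then the word derived at $v$ is obtained from the word derived at $v'$ by wrapping: $\nu(\text{at }v) \sim y_1 w_1' z_1 1 y_2 w_2' z_2$ where $w_1' 1 w_2'$ is the word at $v'$ and $y_1,z_1,y_2,z_2 \in \Sigma^*$. Reading this off in terms of positions gives precisely $i_1 \le i'_1 \le j'_1 \le j_1 \le i_2 \le i'_2 \le j'_2 \le j_2$, i.e. case~4. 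If $v'$ is a descendant of $v$ but not a direct one, then somewhere on the path the rank drops below $1$: there is an intermediate node $u$ of rank $0$ that is a proper constituent containing $v'$, so $w_1' 1 w_2'$ lies entirely within one continuous block of $v$'s word — that block being either the first segment $[i_1,j_1]$ or the second segment $[i_2,j_2]$ of $v$'s constituent — which yields case~3 (the two disjuncts corresponding to the two blocks), or within the part strictly between the two blocks, i.e. inside the gap $[j_1,i_2]$, which is case~2.

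I would organize the write-up as: (i) the disjoint case via the gap analysis of Lemma~\ref{direct-synt}; (ii) the nested-direct case giving case~4; (iii) the nested-indirect case, splitting on which maximal rank-$0$ region of the ancestor contains the descendant, giving cases~2 and~3. The main obstacle I expect is bookkeeping: making the ``without loss of generality'' precise (the statement is symmetric in the two constituents, so one must argue that swapping them or reflecting the word reduces the apparently missing configurations to the listed ones) and verifying that in the indirect-nested subcase the descendant constituent really cannot straddle the separator of the ancestor — this follows because any node between $v$ and $v'$ of rank $0$ derives a continuous word, and that word cannot contain the ancestor's separator position, but one has to track the separator carefully through the intercalation/concatenation operations on the path. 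Everything else is a routine unwinding of the equivalences in Lemma~\ref{term-eq-basic} together with Lemma~\ref{deriv-decompose}.
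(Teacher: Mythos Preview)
The paper does not actually prove this lemma; the appendix merely restates it and cites \cite{Sorokin2013STUS}, where it is established in greater generality. So there is no in-paper argument to compare against, and your sketch stands on its own merits.

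Your three-way split (incomparable nodes / nested along a direct-descendant path / nested with an intermediate node of rank~$0$) is the natural decomposition and is correct. One slip to fix: in the nested-indirect branch you say the rank-$0$ intermediate node $u$ might sit in the gap $[j_1,i_2]$, producing case~2. It cannot. Since $u$ lies on the path from $v$ to $v'$, it is a \emph{descendant} of $v$; by the context-decomposition lemma stated just before Lemma~\ref{direct-synt} (applied with $l=0$) the yield of $u$ is a contiguous subword of the yield of $v$, i.e.\ of $w_1 1 w_2$. Having rank $0$, $u$'s yield contains no separator and therefore lies entirely in $w_1$ or entirely in $w_2$, giving precisely the two disjuncts of case~3. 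The interval $[j_1,i_2]$ is filled from the context \emph{above} $v$, never from a subtree of $v$. This does not damage the overall argument---case~2 is already produced by the incomparable branch---but the nested-indirect branch contributes only case~3. A related remark: for the incomparable branch the tool you want is again that context-decomposition lemma (applied at the lowest common ancestor, to see that the side of rank $0$ occupies a single contiguous block), not Lemma~\ref{direct-synt} itself, which is specific to direct subterms.
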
 

The statement of the lemma above has a nice geometrical interpretation. We associate with every constituent $(i_1, j_1, i_2, j_2)$ of rank $1$ the following curve (the constituents themselves are marked by horizontal lines):

\scalebox{1.5}{
\begin{picture}(200,80)
{
\put(25,50){$i_1$}\put(75,50){$j_1$}\put(125,50){$i_2$}\put(175,50){$j_2$}
\qbezier(27,60)(52,90)(77,60)\qbezier(127,60)(152,90)(177,60)
\qbezier(77,48)(102,18)(127,48)\qbezier(27,48)(102,-22)(177,48)
\qbezier(33,52)(33,52)(74,52)\qbezier(133,52)(133,52)(174,52)
}
\end{picture}
}

The remarkable property of this interpretation is that if we write a derived word of the abscissa axis, enumerate the positions in it and draw the curves corresponding to all its constituents, then these curves must not intersect except the limit points. On the picture below we show all principal variants of different constituents location (solid and dash horizontal lines mark the constituents themselves). 

\scalebox{0.8}{
\begin{picture}(400,430)
\qbezier(10,410)(35,435)(60,410)\qbezier(110,410)(135,435)(160,410)
\qbezier(10,410)(10,410)(60,410)\qbezier(110,410)(135,410)(160,410)
\qbezier(60,410)(85,385)(110,410)\qbezier(10,410)(85,335)(160,410)
\qbezier(210,410)(235,435)(260,410)\qbezier(310,410)(335,435)(360,410)
\qbezier[25](210,410)(235,410)(260,410)\qbezier[25](310,410)(335,410)(360,410)
\qbezier(260,410)(285,385)(310,410)\qbezier(210,410)(285,335)(360,410)

\qbezier(10,330)(40,360)(70,330)\qbezier(300,330)(330,360)(360,330)
\qbezier(10,330)(40,330)(70,330)\qbezier(300,330)(330,330)(360,330)
\qbezier(10,330)(185,190)(360,330)\qbezier(70,330)(185,238)(300,330)
\qbezier(100,330)(130,360)(160,330)\qbezier(210,330)(240,360)(270,330)
\qbezier[30](100,330)(130,330)(160,330)\qbezier[30](210,330)(240,330)(270,330)
\qbezier(100,330)(185,262)(270,330)\qbezier(160,330)(185,310)(210,330)

\qbezier(10,210)(75,275)(140,210)\qbezier(230,210)(295,275)(360,210)
\multiput(10, 210)(5,0){26}{\line(1,0){1}}
\multiput(230, 210)(5,0){26}{\line(1,0){1}}
\qbezier(10,210)(185,70)(360,210)\qbezier(140,210)(185,174)(230,210)
\qbezier(35,210)(75,250)(115,210)\qbezier(255,210)(295,250)(335,210)
\multiput(37, 210)(5,0){16}{\line(1,0){2}}
\multiput(257, 210)(5,0){16}{\line(1,0){2}}
\qbezier(35,210)(185,90)(335,210)\qbezier(115,210)(185,154)(255,210)

\qbezier(10,100)(85,175)(160,100)\qbezier(230,100)(295,175)(360,100)
\multiput(10, 100)(5,0){30}{\line(1,0){1}}
\multiput(230, 100)(5,0){26}{\line(1,0){1}}
\qbezier(10,100)(185,-20)(360,100)\qbezier(160,100)(195,65)(230,100)
\qbezier(20,100)(45,135)(70,100)\qbezier(100,100)(125,135)(150,100)
\multiput(22, 100)(5,0){10}{\line(1,0){2}}
\multiput(102, 100)(5,0){10}{\line(1,0){2}}
\qbezier(20,100)(85,48)(150,100)\qbezier(70,100)(85,80)(100,100)

\end{picture}
}

Provided geometrical interpretation is very helpful in our main task: studying mutual postions of different pumps. Indeed, every pump is defined by its top and bottom nodes, which carry the same nonterminal labels and are connected by the path of nodes of the same rank. Since every node of the derivation tree corresponds to a constituent, then a pump is matched with a pair of embedded constituents with the same label. As earlier, we concentrate on the $4$-pumps which correspond to a pair of constituents of rank $1$. Then every pump can be defined by $8$ numbers $i_1 \leq j_1 \leq k_1 \leq l_1 \leq i_2 \leq j_2 \leq k_2 \leq l_2$ such that $(i_1, l_1, i_2, l_2)$ are the indexes of its top consituent and $(j_1, k_1, j_2, k_2)$ --- of the bottom. We call the segments $[i_1; j_1]$, $[k_1; l_1]$, $[i_2; j_2]$, $[k_2; l_2]$ the segments of the pump and identify a pump with the ascending tuple of its indexes. Below we illustrate how two constituents of rank $1$ with the same label form a $4$-pump:

\begin{center}
\scalebox{0.8}{
\begin{picture}(400,150)
\put(8,98){$i_1$}\put(33,98){$j_1$}\put(113,98){$k_1$}\put(138,98){$l_1$}
\put(228,98){$i_2$}\put(253,98){$j_2$}\put(333,98){$k_2$}\put(358,98){$l_2$}
\qbezier(15,103)(15,103)(33,103)
\qbezier(120,103)(120,103)(138,103)
\qbezier(235,103)(235,103)(253,103)
\qbezier(340,103)(340,103)(358,103)
\qbezier(10,110)(75,175)(140,110)\qbezier(230,110)(295,175)(360,110)
\qbezier(10,98)(185,-42)(360,98)\qbezier(140,98)(185,62)(230,98)
\qbezier(35,110)(75,150)(115,110)\qbezier(255,110)(295,150)(335,110)
\qbezier(35,98)(185,-22)(335,98)\qbezier(115,98)(185,42)(255,98)
\end{picture}
}
\end{center}

Since the curves on the picture are the bounding curves for the constituents forming the pump, the curves corresponding to different pumps must not intersect anywhere except the abscissa axis. The following lemma interprets the geometrical conditions on correct embedding in terms of pump segments:

\begin{lemma}\label{pumps-variants-proof}
One of the possibilities below hold without loss of generality for any pair of $4$-pumps $(i_1, j_1, k_1, l_1, i_2, j_2, k_2, l_2)$ and $(i'_1, j'_1, k'_1, l'_1, i'_2, j'_2, k'_2, l'_2)$: 
\begin{enumerate}
\item $l_2 \leq i'_1$,
\item $i_1 \leq i'_1 \leq l'_2 \leq j_1$ or $k_2 \leq i'_1 \leq l'_2 \leq l_2$,
\item $i_1 \leq\! i'_1 \leq\! j'_1 \leq\! j_1 \leq\! k_1 \leq\! k'_1 \leq\! l'_1 \leq\! l_1 \leq\! i_2 \leq\! i'_2 \leq\! j'_2 \leq\! j_2 \leq\! k_2 \leq\! k'_2 \leq\! l'_2 \leq\!l_2$,
\item $i_1 \leq\! i'_1 \leq\! j'_1 \leq\! k'_1 \leq\! j_1 \leq\! k_1 \leq\! l'_1 \leq\! l_1 \leq\! i_2 \leq\! i'_2  \leq\! j_2 \leq\! k_2 \leq\! j'_2 \leq\! k'_2 \leq\! l'_2 \leq\!l_2$,
\item $i_1 \leq\! i'_1 \leq\! j_1 \leq\! k_1 \leq\! j'_1 \leq\! k'_1 \leq\! l'_1 \leq\! l_1 \leq\! i_2 \leq\! i'_2 \leq\! j'_2 \leq\! k'_2  \leq\! j_2 \leq\! k_2 \leq\! l'_2 \leq\!l_2$,
\item $i_1 \leq\! i'_1 \leq\! j_1 \leq\! j'_1 \leq\! k'_1 \leq\! k_1 \leq\! l'_1 \leq\! l_1 \leq\! i_2 \leq\! i'_2 \leq\! j_2 \leq\! j'_2 \leq\! k'_2 \leq\! k_2 \leq\! l'_2 \leq\! l_2$,
\item $k_1 \leq i'_1 \leq l'_1 \leq l_1 \leq i_2 \leq i'_2 \leq l'_2 \leq j_2$,
\item $i_1 \leq i'_1 \leq l'_1 \leq j_1 \leq k_2 \leq i'_2 \leq l'_2 \leq l_2$,
\item $k_1 \leq i'_1 \leq l'_2 \leq l_1$ or $i_2 \leq i'_1 \leq l'_2 \leq j_2$,
\item $j_1 \leq i'_1 \leq l'_1 \leq k_1 \leq j_2 \leq i'_2 \leq l'_2 \leq k_2$,
\item $j_1 \leq i'_1 \leq l'_2 \leq k_1$ or $j_2 \leq i'_1 \leq l'_2 \leq k_2$,
\item $l_1 \leq i'_2 \leq l'_2 \leq i_2$.
\end{enumerate}\end{lemma}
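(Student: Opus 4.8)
The statement is a purely combinatorial assertion about the relative order of sixteen integers, and the plan is to derive it from Lemma~\ref{const-variants-proof} by a structured case analysis, using the picture of non-crossing bounding curves as the organising device.

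First I would unwind the definitions. A $4$-pump $\pi=(i_1,j_1,k_1,l_1,i_2,j_2,k_2,l_2)$ is the pair of its top constituent $T_\pi=(i_1,l_1,i_2,l_2)$ and its bottom constituent $B_\pi=(j_1,k_1,j_2,k_2)$; since $B_\pi$ is a proper descendant of $T_\pi$ with the same label, $T_\pi$ and $B_\pi$ are related by case~(4) of Lemma~\ref{const-variants-proof}, and that case is precisely the ascending chain $i_1\le j_1\le k_1\le l_1\le i_2\le j_2\le k_2\le l_2$ written in the statement, so $j_1,k_1$ lie in the left block $[i_1,l_1]$ and $j_2,k_2$ in the right block $[i_2,l_2]$ of $T_\pi$. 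Because $\pi_1$ and $\pi_2$ occur in one derivation tree, each of the four constituent pairs $(T_{\pi_1},T_{\pi_2})$, $(T_{\pi_1},B_{\pi_2})$, $(B_{\pi_1},T_{\pi_2})$, $(B_{\pi_1},B_{\pi_2})$ again falls under Lemma~\ref{const-variants-proof}; equivalently, the two pump curves drawn above do not meet off the abscissa axis. This system of compatibility constraints, together with the internal order of each pump, is the only hypothesis the argument uses.

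The enumeration would be organised by the relation between the top constituents $T_{\pi_1}$ and $T_{\pi_2}$, which by Lemma~\ref{const-variants-proof} is one of four. After possibly exchanging $\pi_1$ with $\pi_2$ and reflecting the word from left to right --- the two symmetries the statement is invariant under --- the four situations are: the tops are disjoint ($l_2\le i'_1$), which is item~1; $T_{\pi_2}$ lies in the gap $[l_1,i_2]$ of $T_{\pi_1}$, which already forces item~12; $T_{\pi_2}$ lies inside one of the two material intervals of $T_{\pi_1}$, in which case applying Lemma~\ref{const-variants-proof} to $B_{\pi_1}$ against the whole of $\pi_2$ and using non-crossing shows that $\pi_2$ cannot straddle an endpoint of $B_{\pi_1}$, so it is confined to one of $[i_1,j_1]$, $[j_1,k_1]$, $[k_1,l_1]$ (or their mirror images in the right block), giving items~2,~11,~9; and finally $T_{\pi_1}$ and $T_{\pi_2}$ are properly nested. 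The nested case is the substantial one: here $T_{\pi_2}$ straddles the whole gap of $T_{\pi_1}$, and one must still determine how the first part of $T_{\pi_2}$ interleaves with the first part $[j_1,k_1]$ of $B_{\pi_1}$, how the second parts interleave, and where $B_{\pi_2}$ finally sits; applying Lemma~\ref{const-variants-proof} to the pairs $(T_{\pi_1},B_{\pi_2})$, $(B_{\pi_1},T_{\pi_2})$, $(B_{\pi_1},B_{\pi_2})$ prunes the a priori possibilities to items~3--8 and~10. Collecting the surviving configurations from the four branches and discarding those that coincide under the two symmetries yields exactly the list of twelve.

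The main difficulty will lie not in any individual deduction but in controlling this nested branch: the number of conceivable orderings of the sixteen endpoints is large, each must be tested against all four non-crossing constraints, and one has to check carefully that the exchange and reflection symmetries are applied consistently, so that the final list is both exhaustive and free of repetitions. In practice I would draw, for every surviving ordering, the corresponding diagram of two non-crossing pump curves, in the style of the figures accompanying Lemma~\ref{const-variants-proof}; matching orderings against such diagrams is the most reliable way to certify completeness and to assign each to the correct item.
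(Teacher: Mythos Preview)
Your plan is correct and follows essentially the same route as the paper: both arguments reduce the lemma to repeated application of Lemma~\ref{const-variants-proof} to the four constituent pairs $(T_{\pi_1},T_{\pi_2})$, $(T_{\pi_1},B_{\pi_2})$, $(B_{\pi_1},T_{\pi_2})$, $(B_{\pi_1},B_{\pi_2})$, and both use the non-crossing curve picture to keep the case analysis under control. The only difference is organisational: you make the primary split according to all four alternatives of Lemma~\ref{const-variants-proof} for the pair of tops, whereas the paper first separates the linear case (item~1) and the gap case (item~12), then in the remaining situation --- where the region of $T_{\pi_1}$ contains that of $T_{\pi_2}$ --- branches on the relation between $B_{\pi_1}$ and $T_{\pi_2}$, and only in the final subcase (where $T_{\pi_2}$ contains $B_{\pi_1}$) compares the two bottoms. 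Your third branch (``$T_{\pi_2}$ inside one material interval of $T_{\pi_1}$'') is distributed across the paper's $B_{\pi_1}$-versus-$T_{\pi_2}$ subcases, but the resulting item assignments agree; neither decomposition is shorter, and both require the same diagram-checking at the leaves.
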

\begin{proof}
We derive the current lemma formally from Lemma \ref{const-variants-proof}, illustrating the proof by geometrical arguments. We call a pair of $4$-pumps linear if $l_2 \leq i'_1$ or $l'_2 \leq i_1$. We call the pump $(i_1, j_1, k_1, l_1, i_2, j_2, k_2, l_2)$ outer for the pump $(i'_1, j'_1, k'_1, l'_1, i'_2, j'_2, k'_2, l'_2)$ if the condition $i_1 \leq i'_1 \leq l'_2 \leq l_2$ holds. Note that if two pumps do not form a linear pair, then one of them is an outer for the other.

We denote $\pi = (i_1, j_1, k_1, l_1, i_2, j_2, k_2, l_2)$ and $\pi' = (i'_1, j'_1, k'_1, l'_1, i'_2, j'_2, k'_2, l'_2)$ to shorten the notation. If the pair of $\pi$ and $\pi'$ is linear then up to renaming the pumps the first alternative of the lemma holds. Otherwise one of the pumps is the outer for another, let $\pi$ be such a pump. So $i_1 \leq i'_1 \leq l'_2 \leq l_2$. Consider the constituents $(i_1, l_1, i_2, l_2)$, $(j_1, k_1, j_2, k_2)$, $(i'_1, l'_1, i'_2, l'_2)$, $(j'_1, k'_1, j'_2, k'_2)$, each of then bounds a region on the plane. By the geometric interpretation of Lemma \ref{const-variants-proof} for any pair of such regions there are only two possibilities either the elements of the pair do not intersect or the smaller constituent is inside the bigger. 

Consider at first the case when the regions of the constituents $(i_1, l_1, i_2, l_2)$ and $(i'_1, l'_1, i'_2, l'_2)$ do not intersect. Since the segment $[i'_1; l_2]$ is a subset of the segment $[i_1; l_2]$ it is possible only when $l_1 \leq i'_1 \leq l'_2 \leq i_2$, which is one of the alternatives provided by the present lemma. It is illustrated on the picture below.

\scalebox{0.8}{
\begin{picture}(400,80)

\put(8,49){$i_1$}\put(68,49){$l_1$}\put(298,49){$i_2$}\put(358,49){$l_2$}
\qbezier(10,60)(40,90)(70,60)\qbezier(300,60)(330,90)(360,60)
\qbezier(10,48)(185,-92)(360,48)\qbezier(70,48)(185,-44)(300,48)
\put(98,49){$i'_1$}\put(158,49){$l'_1$}\put(208,49){$i'_2$}\put(268,49){$l'_2$}
\qbezier[30](100,60)(130,90)(160,60)\qbezier[30](210,60)(240,90)(270,60)
\qbezier[85](100,48)(185,-20)(270,48)\qbezier[25](160,48)(185,28)(210,48)
\end{picture}

}

\vspace*{12pt}
In the other case the region corresponding to the constituent $(i_1, l_1, i_2, l_2)$ contains all the other regions.
We consider different variants of embedding of the constituents. If constituents $(j_1, k_1, j_2, k_2)$ and $(i'_1, l'_1, i'_2, l'_2)$ do not intersect,then either $l'_2 \leq j_1$, $k_2 \leq i'_1$, $l'_1 \leq j_1 \leq k_2 \leq i'_2$ or $k_1 \leq i'_1 \leq l'_2 \leq j_2$. In the first case $i_1 \leq i'_1 \leq l'_2 \leq j_1$, symmetrically in the second $k_2 \leq i'_1 \leq l'_2 \leq l_2$, and in the third case $i_1 \leq i'_1 \leq l'_1 \leq j_1 \leq k_2 \leq i'_2 \leq l'_2 \leq l_2$ which all satisfy the requirements of the present lemma. The third case is illustrated on the picture below:

\vspace*{4pt}
\scalebox{0.8}{
\begin{picture}(400,180)

\put(8,139){$i_1$}\put(174,139){$l_1$}\put(222,139){$i_2$}\put(388,139){$l_2$}
\qbezier(10,150)(93,216)(176,150)\qbezier(224,150)(307,216)(390,150)
\qbezier(10,138)(200,-12)(390,138)\qbezier(176,138)(200,118)(224,138)
\put(103,139){$j_1$}\put(163,139){$k_1$}\put(233,139){$j_2$}\put(293,139){$k_2$}
\qbezier(105,150)(135,174)(165,150)\qbezier(235,150)(265,174)(295,150)
\qbezier(105,138)(200,62)(295,138)\qbezier(165,138)(200,110)(235,138)
\put(23,139){$i'_1$}\put(83,139){$l'_1$}\put(313,139){$i'_2$}\put(373,139){$l'_2$}
\qbezier[40](25,150)(55,174)(85,150)\qbezier[40](315,150)(345,174)(375,150)
\qbezier[220](25,138)(200,-2)(375,138)\qbezier[140](85,138)(200,46)(315,138)
\end{picture}
}

\vspace*{-48pt}
Consider the last subcase $k_1 \leq i'_1 \leq l'_2 \leq j_2$, then applying Lemma \ref{const-variants-proof} to the constituents $(i_1, l_1, i_2, l_2)$ and $(i'_1, l'_1, i'_2, l'_2)$ we obtain that either $l'_2 \leq l_1$, $i_2 \leq i'_1$ or $l'_1 \leq l_1 \leq i_2 \leq i'_2$. Taking into account all the inequalities, we obtain that there are three possibilities: $k_1 \leq i'_1 \leq l'_2 \leq l_1$, $i_2 \leq i'_1 \leq l'_2 \leq j_2$ or $k_1 \leq i'_1 \leq l'_1 \leq l_1 \leq i_2 \leq i'_2 \leq l'_2 \leq j_2$. All these variants are allowed in the lemma statement. The latter variant is illustrated on the picture below:

\scalebox{0.8}{
\begin{picture}(400,180)

\put(8,139){$i_1$}\put(174,139){$l_1$}\put(222,139){$i_2$}\put(388,139){$l_2$}
\qbezier(10,150)(93,216)(176,150)\qbezier(224,150)(307,216)(390,150)
\qbezier(10,138)(200,-12)(390,138)\qbezier(176,138)(200,118)(224,138)
\put(18,139){$i'_1$}\put(78,139){$l'_1$}\put(318,139){$i'_2$}\put(378,139){$l'_2$}
\qbezier(20,150)(50,174)(80,150)\qbezier(320,150)(350,174)(380,150)
\qbezier(20,138)(200,-6)(380,138)\qbezier(80,138)(200,42)(320,138)
\put(98,139){$j_1$}\put(158,139){$k_1$}\put(238,139){$j_2$}\put(298,139){$k_2$}
\qbezier[30](100,150)(130,174)(160,150)\qbezier[30](240,150)(270,174)(300,150)
\qbezier[100](100,138)(200,58)(300,138)\qbezier[40](160,138)(200,106)(240,138)

\end{picture}

}

\vspace*{-48pt}
Now consider the case when the region of the constituent $(i'_1, l'_1, i'_2, l'_2)$ is inside the region of $(j_1, k_1, j_2, k_2)$. It means that one of the following possibilities hold: $j_1 \leq i'_1 \leq k'_2 \leq k_1$, $j_2 \leq i'_1 \leq l'_2 \leq k_2$ or $i'_1 \leq j_1 \leq k_1 \leq l'_1 \leq i'_2 \leq j_1 \leq k_2 \leq l'_2$. All these variants satisfy the requirements of the Lemma.

So it remains to inspect the case when the region of constituent $(i'_1, l'_1, i'_2, l'_2)$ includes the region of $(j_1, k_1, j_2, k_2)$. This situation is illustrated on the picture below:

\scalebox{0.8}{
\begin{picture}(400,180)

\put(8,139){$i_1$}\put(168,139){$l_1$}\put(228,139){$i_2$}\put(388,139){$l_2$}
\qbezier(10,150)(90,214)(170,150)\qbezier(230,150)(310,214)(390,150)
\qbezier(10,138)(200,-12)(390,138)\qbezier(170,138)(200,118)(230,138)
\put(33,139){$i'_1$}\put(143,139){$l'_1$}\put(253,139){$i'_2$}\put(363,139){$l'_2$}
\qbezier[70](35,150)(90,194)(145,150)\qbezier[70](255,150)(310,194)(365,150)
\qbezier[210](35,138)(200,6)(365,138)\qbezier[70](145,138)(200,94)(255,138)
\put(58,139){$j_1$}\put(118,139){$k_1$}\put(278,139){$j_2$}\put(338,139){$k_2$}
\qbezier(60,150)(90,174)(120,150)\qbezier(280,150)(310,174)(340,150)
\qbezier(60,138)(200,26)(340,138)\qbezier(120,138)(200,74)(280,138)

\end{picture}

}

\vspace*{-36pt}
Then $i_1 \leq i'_1 \leq j_1 \leq k_1 \leq l'_1 \leq l_1 \leq i_2 \leq i'_2 \leq j_2 \leq k_2 \leq l'_2 \leq l_2$ and we should consider the mutual positions of the regions of constituents $(j_1, k_1, j_2, k_2)$ and $(j'_1, k'_1, j'_2, k'_2)$. This leads us to the following variants:
$$
\begin{array}{l}
i'_1 \leq j'_1 \leq k'_1 \leq j_1 \leq k_1 \leq l'_1 \leq i'_2 \leq j_2 \leq k_2 \leq j'_2 \leq k'_2 \leq l'_2,\\
i'_1 \leq j_1  \leq j'_1 \leq k'_1 \leq k_1 \leq l'_1 \leq i'_2 \leq j_2  \leq j'_2 \leq k'_2 \leq k_2 \leq l'_2,\\
i'_1 \leq j_1 \leq k_1  \leq j'_1 \leq k'_1 \leq l'_1 \leq i'_2  \leq j'_2 \leq k'_2 \leq j_2 \leq k_2 \leq l'_2,\\
i'_1 \leq j'_1 \leq j_1 \leq k_1 \leq k'_1 \leq l'_1 \leq i'_2 \leq j'_2 \leq j_2 \leq k_2 \leq \leq k'_2 \leq l'_2,
\end{array}
$$
But all such variants are allowed by the conclusion of the lemma. All the cases have been verified and the lemma is proved.
\end{proof}

\end{document}